\newcommand{\eqi}{\gamma}
\newcommand{\desc}{\tau}
\newcommand{\Tr}{^\mathrm{\small T}}
\newcommand{\kerm}{\mathrm{ker}}
\newcommand{\va}{\alpha}
\newcommand{\vx}{{x}}
\newcommand{\vy}{{y}}
\newcommand{\parenth}[1]{\left({#1}\right)}
\newcommand{\Pm}{\mathrm{P}}
\newcommand{\Id}{\boldsymbol{\mathrm{I}}}
\newcommand{\Amb}{\boldsymbol{\mathrm{A}}}
\newcommand{\Fmb}{\boldsymbol{\mathrm{F}}}
\newcommand{\Hmb}{\boldsymbol{\mathrm{H}}}
\newcommand{\Phb}{\boldsymbol{\Phi}}
\newcommand{\opnorm}[1]{\left\lvert\!\left\lvert\!\left\lvert #1 \right\rvert\!\right\rvert\!\right\rvert}
\newcommand{\norm}[1]{\left\lVert #1 \right\rVert}
\newcommand{\pds}[2]{\left\langle#1,#2\right\rangle}
\newcommand{\abs}[1]{\left\lvert #1 \right\rvert}
\newcommand{\trans}{^{\mathrm{T}}}
\newcommand{\Cc}{\mathcal{C}}
\newcommand{\RR}{\mathbb{R}}
\newcommand{\Hc}{\mathcal{H}}
\newcommand{\Mc}{\mathcal{M}}
\newcommand{\Oc}{\mathcal{O}}
\newcommand{\Pc}{\mathcal{P}}
\newcommand{\Prj}{\mathcal{P}}
\DeclareMathOperator{\prox}{prox}
\DeclareMathOperator{\dom}{dom}
\DeclareMathOperator{\ri}{ri}
\DeclareMathOperator*{\argmin}{argmin}
\DeclareMathOperator*{\arginf}{arginf}
\DeclareMathOperator{\card}{card}
\renewcommand{\ge}{\geqslant}
\renewcommand{\le}{\leqslant}
\theoremstyle{plain}
\newtheorem{theorem}{\textbf{Theorem}} 
\newtheorem{lemma}[theorem]{\textbf{Lemma}}
\newtheorem{propo}[theorem]{\textbf{Proposition}}
\newtheorem{definition}[theorem]{\textbf{Definition}}
\begin{document}

\begin{frontmatter}



\title{Deconvolution under Poisson noise using exact data fidelity and synthesis or analysis sparsity priors}


\author[greyc,cea]{F.-X. Dup\'e\corref{cor1}}
\ead{francois-xavier.dupe@cea.fr}

\author[greyc]{M. J. Fadili}
\ead{Jalal.Fadili@greyc.ensicaen.fr}

\author[cea]{J.-L. Starck}
\ead{jean-luc.starck@cea.fr}

\cortext[cor1]{Corresponding author}

\address[greyc]{GREYC UMR CNRS 6072, Universit\'e de Caen Basse-Normandie/ENSICAEN, 6 Bd Mar\'echal Juin, 14050 Caen, France}
\address[cea]{Laboratoire AIM, UMR CEA-CNRS-Paris 7, Irfu, SEDI-SAP, Service d'Astrophysique,  CEA Saclay, F-91191 GIF-Sur-YVETTE CEDEX, France}

\begin{abstract}
  In this paper, we propose a Bayesian MAP estimator for solving the deconvolution problems when the observations are
  corrupted by Poisson noise. Towards this goal, a proper data fidelity term (log-likelihood) is introduced to reflect
  the Poisson statistics of the noise. On the other hand, as a prior, the images to restore are assumed to be positive
  and sparsely represented in a dictionary of waveforms such as wavelets or curvelets. Both analysis and synthesis-type
  sparsity priors are considered. Piecing together the data fidelity and the prior terms, the deconvolution problem
  boils down to the minimization of non-smooth convex functionals (for each prior). We establish the well-posedness of
  each optimization problem, characterize the corresponding minimizers, and solve them by means of proximal splitting
  algorithms originating from the realm of non-smooth convex optimization theory. Experimental results are conducted to
  demonstrate the potential applicability of the proposed algorithms to astronomical imaging datasets.
\end{abstract}

\begin{keyword}
  Deconvolution \sep Poisson noise \sep Proximal iteration \sep Iterative thresholding \sep Sparse  representations.
\end{keyword}

\end{frontmatter}

\section{Introduction}
\label{sec:intro}

Deconvolution is a longstanding problem in many areas of signal and image processing (e.g. biomedical imaging
\cite{Sarder2006,Pawley2005,Bertero2009}, astronomy \cite{Starck2006,Bertero2009}, remote-sensing, to quote a few). For
example, research in astronomical image deconvolution has seen considerable work, partly triggered by the
Hubble space telescope (HST) optical aberration problem at the beginning of its mission. In biomedical imaging,
researchers are also increasingly relying on deconvolution to improve the quality of images acquired using complex
optical systems, like confocal microscopes \cite{Pawley2005}. Deconvolution may then prove crucial for exploiting images
and extracting scientific content, and is more and more involved in everyday research.

An extensive literature exist on the deconvolution problem \cite{AndrewsHunt77,Stark1987,Jansson97}. In presence of Poisson noise, several deconvolution methods have been proposed such as Tikhonov-Miller inverse filter and Richardson-Lucy (RL) algorithms; see \cite{Sarder2006,Starck2006} for comprehensive reviews. The RL has been extensively used in many applications, but as it tends to amplify the noise after a few iterations, several extension have been proposed. For example, wavelet-regularized RL algorithm has been proposed by several authors \cite{Starck94,Starck95,Bijaoui04}. The interested reader may refer to \cite{Bertero2009,Dupe2009c} for a more complete review on deconvolution with Poisson
noise.

In the context of deconvolution with either Gaussian or Poisson noise, sparsity-promoting regularization over orthobasis
or frame dictionaries has been recently proposed
\cite{Starck03,Figueiredo2003,Daubechies2004,Combettes2005,Fadili2006a,Combettes2007a,Dupe2009c,Pustelnik2009,Figueiredo2010}.

For instance, in \cite{Dupe2009c} the authors proposed to stabilize the Poisson noise using the Anscombe variance
stabilizing transform \cite{Anscombe48} to bring the problem back to deconvolution with additive white Gaussian noise,
but at the price of a non-linear data fidelity term. However, stabilization has a cost, and the performance of such an
approach degrades in low intensity regimes. On the other hand, the authors in \cite{Pustelnik2009} use directly the data
fidelity term related to Poisson noise (see Section~\ref{sec:problem-statement}) and exploit its additive structure
together with some particularities of the composition of the convolution operator with tight frames in order to solve
the problem using a proximal framework. However, their framework only applies to some special convolution kernels. Using
the augmented Lagrangian method with the alternating method of multipliers algorithm, \cite{Figueiredo2010} presented a
deconvolution algorithm with either a synthesis or an analysis prior. In fact, ADMM is nothing but the Douglas-Rachford
splitting \cite{Eckstein92} applied to the Fenchel-Rockafellar dual objective.

\paragraph{Contributions}
In this paper, we propose an image deconvolution algorithm for data blurred and contaminated by Poisson noise using
analysis and synthesis-type sparsity priors. In order to form the data fidelity term, we take the exact Poisson
likelihood.  Putting together the data fidelity and the prior terms, the deconvolution problem is formulated as the
minimization of a maximum a posteriori (MAP) objective functional involving three terms: the data fidelity term; a
non-negativity constraint (as Poisson data are positive by definition); and a regularization term, in the form of a
non-smooth penalty that enforces the sparsity of the sought after image over a {\textemdash}possibly redundant{\textemdash} dictionary of waveforms. We establish the well-posedness of our optimization problems and characterize the corresponding minimizers. We then solve them by means of proximal splitting algorithms originating from the field of non-smooth convex optimization theory. More precisely we take benefit from a generalization of Douglas-Rachford splitting to product spaces; see Section~\ref{sec:gener-minim-algor}. In order to use this proximal splitting algorithm, the proximity operator of each individual term in the objective functional is established. Experimental results on several imaging datasets are carried out to demonstrate the potential applicability of the proposed algorithms and compare them with some competitors.

\paragraph{Notation and terminology}
\label{sec:notation}

Let $\Hc$ a real Hilbert space, here a finite dimensional real vector space. We denote by $\norm{.}$ the norm associated
with the inner product $\pds{.}{.}$ in $\Hc$, and $\Id$ is the identity operator on $\Hc$. $\vx$ and $\va$ are
respectively reordered vectors of image samples and coefficients. We denote by $\ri \Cc$ the relative interior
of a convex set $\Cc$.

A real-valued function $f$ is coercive, if $\lim_{\norm{z} \to +\infty}f\parenth{z}=+\infty$. The domain of $f$ is
defined by $\dom f = \{ z\in\Hc \mid f(z) < +\infty \}$ and $f$ is proper if $\dom f \neq \emptyset$. A real-valued
function $f$ is lower semi-continuous (lsc) if $\liminf_{z \to z_0} f(z) \ge f(z_0)$. $\Gamma_0(\Hc)$ is the class of
all proper lsc convex functions from $\Hc$ to $(-\infty,+\infty]$. The subdifferential of a function $f \in
\Gamma_0(\Hc)$ at $z \in \Hc$ is the set $\partial f(z) = \left\{u \in \Hc \mid \forall y \in \Hc, f(y) \ge f(z) +
  \pds{u}{y-z}\right\}$. Its Legendre-Fenchel conjugate is $f^*$. We denote by $\imath_{\Cc}$ the indicator of the convex set $\Cc$: $ \imath_{\Cc} (z) =
  \begin{cases}
    0, & \text{if } z \in \Cc ~ ,\\
    +\infty, & \text{otherwise}.
  \end{cases}$.
See \cite{LemarechalHiriart96} for a more comprehensive account on convex analysis.

We denote by $\opnorm{\mathbf{M}}= \max_{z \neq 0} \frac{\norm{\mathbf{M}z}}{\norm{z}}$ the spectral norm of
$\mathbf{M}$.
  
\section{Sparsity prior}
\label{sec:sparse-image-repr}

Let $x \in \RR^n$ be an $\sqrt{n}\times\sqrt{n}$ image. We denote by $\Phb$ the dictionary, i.e. the $n\times L$ matrix
whose columns are the generating waveforms (called atoms) $\parenth{\varphi_i}_{1 \leq i \leq L}$ all normalized to a
unit $\ell_2$-norm. The forward (analysis) transform is then defined by a non-necessarily square matrix $\Phb\Tr \in
\mathbb{R}^{L\times n}$ with $L\ge n$. When $L > n$ the dictionary is said to be redundant or overcomplete. A dictionary
matrix $\Phb$ is said to be a frame with bounds $c_1$ and $c_2$, $0<c_1\le c_2<+\infty$, if $ c_1 \norm{x}^2 \le
\norm{\Phb\Tr x}^2 \le c_2 \norm{x}^2$~.  A frame is tight when $c_1=c_2=c$, i.e. $\Phb\Phb\Tr =c\Id$.

In a synthesis prior model, the image $x$ can be written as the superposition of the elementary atoms $\varphi_i$
according to the following linear generative model $x = \sum_{i=1}^L \va[i] \varphi_i = \Phb \va$. $x$ is said to be
sparse (resp. compressible)\footnote{With a slight abuse of terminology, in both cases we will use the term sparse.} in
$\Phb$ if it can be represented exactly (resp. approximately) as a superposition of a small fraction of the atoms in the
dictionary $\Phb$ compared to the ambient dimension $n$. In an analysis-type prior model, the transform coefficients
$\Phb\Tr x$ of the image $x$ are assumed to be sparse.

In the sequel, the dictionary $\Phb$ corresponds to a frame and will be built by taking union of one or several
transforms. For instance, for astronomical objects such as stars, the wavelet transform is a very good candidate
\cite{Mallat98}. For more anisotropic or elongated structures, ridgelets or curvelets would be a better choice
\cite{CandesDonohoCurvelets}.

%
%
%
%

\section{Problem statement}
\label{sec:problem-statement}

Consider the image formation model where an input image of $n$ pixels $\vx$ is blurred by a point spread function (PSF)
$h$ and contaminated by Poisson noise. The observed image is then a discrete collection of counts $\vy=(\vy_i)_{1 \le i
  \le n}$ which are bounded, i.e. $\vy \in \ell_{\infty}$.  Each count $y_i$ is a realization of an independent Poisson
random variable with a mean $(h \circledast \vx)_i$, where $\circledast$ is the circular convolution operator. Formally,
this writes,
\begin{equation}
  \label{eq:2}
  \vy_i \sim \Prj\parenth{(h \circledast \vx)_i}~.
\end{equation}
This formula can be rewritten in a vector form as $\vy \sim \Pc(\Hmb \vx)$, where $\Hmb$ is a circular convolution
matrix. The deconvolution problem at hand is to restore $\vx$ from the observed count image $\vy$.

A natural way to attack this problem would be to adopt a maximum a posteriori (MAP) bayesian framework with an
appropriate likelihood function {\textemdash} the distribution of the observed data $\vy$ given an original $\vx$
{\textemdash} reflecting the Poisson statistics of the noise. As a prior, the image is supposed to be economically
(sparsely) represented in a pre-chosen dictionary $\Phb$ as measured by a sparsity-promoting penalty $\Psi$ supposed
throughout to be proper, lsc and convex but non-smooth, e.g. the $\ell_1$ norm.

From the probability density function of a Poisson random variable, the likelihood writes: 
\begin{equation}
  \label{eq:7}
  p(y|x) = \prod_i \frac{((\Hmb x)[i])^{y[i]} \exp\left(-(\Hmb x)[i]\right)}{y[i]!}~,
\end{equation}
and the associated log-likelihood function is
\begin{equation}
  \label{eq:8}
  \ell\ell (y|x) = \sum_i \big( y[i]\log((\Hmb x)[i]) - (\Hmb x)[i] - \log(y[i]!)  \big)~.
\end{equation}
These formula are extended to the case $y=0$, using the convention $0!=1$. Taking negative
log-likelihood, we arrive at the following data fidelity term:
\begin{align}
  \label{eq:9}
  f_1\ &: \eta \in \mathbb{R}^n \mapsto \sum_{i=1}^n f_{\mathrm{poisson}}(\eta[i]), \\
  \text{if } y[i] > 0,\quad
  f_{\mathrm{poisson}}(\eta[i]) &=
  \begin{cases}
    -y[i] \log(\eta[i]) + \eta[i] & \text{if } \eta[i] > 0,\\
    +\infty & \text{otherwise,}
  \end{cases} \nonumber \\
  \text{if } y[i] = 0,\quad
  f_{\mathrm{poisson}}(\eta[i]) &=
  \begin{cases}
    \eta[i] & \text{if } \eta[i] \in [0,+\infty), \\
    +\infty & \text{otherwise.}
  \end{cases} \nonumber
\end{align}

Our aim is then to solve the following optimization problems, with either an analysis-type prior,
\begin{equation}
  \label{eq:11}
  \begin{gathered}
    \tag{$\Pm_{\eqi,\psi}$} \argmin_{\vx\in\mathbb{R}^n} J(\vx), \\
    J\ :\ \vx \mapsto {f_1\circ\Hmb(\vx)} + \underbrace{\eqi
      \Psi\circ\Phb\trans(\vx)}_{f_2\circ\Phb\trans(\vx)} + \underbrace{\imath_{\Cc} (\vx)}_{f_3(\vx)}~,
  \end{gathered}
\end{equation}
or a synthesis-type prior,
\begin{equation}
  \label{eq:12}
  \begin{gathered}
    \tag{$\Pm_{\eqi,\psi}'$} \argmin_{\va\in\mathbb{R}^L} J'(\va), \\
    J'\ :\ \va \mapsto {f_1\circ\Hmb\circ\Phb(\va)} + \underbrace{\eqi
      \Psi(\va)}_{f_2(\va)} + \underbrace{\imath_{\Cc} \circ
      \Phb(\va)}_{f_3\circ\Phb(\va)}~.
  \end{gathered}
\end{equation}
The penalty function $\Psi$ is additive, i.e. $\Psi(\va) = \sum_{i=0}^{L} \psi(\va[i])$, $\eqi > 0$ is a regularization parameter and $\imath_{\Cc}$ is the indicator function of the closed convex set $\Cc$. In our case, $\Cc$ is the positive orthant since we are fitting Poisson intensities, which are positive by nature.

In \eqref{eq:11}, the solution image $x_{\text{A}}^\star$ is directly targeted whose transform (analysis) coefficients $\Phb\trans x_{\text{A}}^\star$ are the sparsest. While in problem \eqref{eq:12}, we are seeking a sparse set of coefficients $\va_S^\star$ and the solution signal or image is synthesized from these representation coefficients and the dictionary $\Phb$ as $x_{\text{S}}^\star=\Phb\va_{\text{S}}^\star$. 

Obviously, problems \eqref{eq:11} and \eqref{eq:12} are not equivalent in general unless $\Phb$ is an orthobasis. For overcomplete non-orthogonal dictionaries, the solutions to synthesis and analysis formulations are different. Indeed, in the synthesis prior the set of  solutions $x_{\text{S}}^\star$ is confined to the column space of the dictionary, whereas in the analysis formulation, the solutions $x_{\text{A}}^\star$ are allowed to be arbitrary vectors in $\mathbb{R}^n$. Furthermore, for redundant dictionaries, there are much fewer unknowns in the analysis formulation, hence leading to a "simpler" optimization problem, although the proximity operators become more complicated because of composition even for simple functions. As opposed to the analysis approach, the synthesis approach has a constructive form providing an explicit description of the signals it represents, and as such, can benefit from higher redundancy to synthesize rich and complex signals. On the other hand, in the analysis approach, we can ask a signal or image to simultaneously agree with many columns of $\Phb$. This might become impossible with a highly redundant dictionary.

It can be shown that for a Parseval tight frame (PTF) dictionary $\Phb$, using the change of variable $\va=\Phb\trans x$, an analysis-type prior formulation can be written as a linearly constrained synthesis-type prior formulation. The constraint ensures that $\va$ remains in the span of $\Phb\trans$. Therefore, analysis-type prior forms can be seen as a subset of synthesis-type prior ones; see \cite{EladAnalysisSynthesis07} for another geometrical argument. If a global minimizer of the synthesis-type prior satisfies the span analysis constraint, i.e. is a feasible analysis solution, then both problems are equivalent. Other equivalence conditions can be drawn if stronger assumptions (but useless to promote sparse solutions) are imposed on the penalty $\Psi$; see e.g. \cite{EladAnalysisSynthesis07}.

With the notable exception of \cite{EladAnalysisSynthesis07}, and a recent paper by \cite{CandesAnalysis} on compressed sensing (with PTF dictionaries), a little is known about the actual theoretical guarantees of analysis-type priors in general. From a practical point of view, there is no consensus either on the type of conclusions to be drawn from the experimental work reported in the literature. Some authors have reported results where the analysis prior is superior to its synthesis counterpart, e.g. \cite{EladAnalysisSynthesis07,CandesIRL108,SelesnickSPIE09}. Others have shown that they are more or less equivalent. In our work here, the synthesis prior turns out to be better on the Saturn image, presumably because the dictionary is very well suited to sparsely synthesize the image (see Section~\ref{sec:imaging-dataset}).

In short, these phenomena are still not very well understood, especially for general inverse problems operators and dictionaries $\Phb$. More investigation is needed in this direction to extricate the deep distinctions and similarities between the two priors.

\subsection{Properties of the objective functions}
\label{sec:prop-object-funct}

From \eqref{eq:11} and \eqref{eq:12}, we have the following,
\begin{propo}
\label{prop:objectives}
  {~}\\ \vspace{-0.5cm}
  \begin{enumerate}[(i)]
  \item $f_1$ is a convex function and so are $f_1 \circ \Hmb$ and $f_1\circ\Hmb\circ\Phb$. 
  \item $f_1$ is strictly convex if $\forall i \in
    \{1,\ldots,n\}, y[i] \ne 0$. $f_1\circ\Hmb$ remains strictly convex if $\kerm(\Hmb) = \emptyset$, and so does $f_1\circ\Hmb\circ\Phb$ if $\Phb$ is an orthobasis and $\kerm(\Hmb) = \emptyset$.
  \item The gradient of $f_1$ is
   \begin{align}
      \label{eq:104}
      \nabla f_1(\eta) &= (g(\eta[i]))_{1\le i\le n},\\
      \text{if } y[i] > 0,\quad
      g(\eta[i]) &=
      \begin{cases}
        1 - \frac{y[i]}{\eta[i]} & \text{if } \eta[i] > 0, \\
        +\infty & \text{else}.
      \end{cases} \nonumber \\
      \text{if } y[i] = 0,\quad
      g(\eta[i]) &=
      \begin{cases}
        1 & \text{if } \eta[i] \ge 0, \\
        +\infty & \text{else}.
      \end{cases} \nonumber  
  \end{align}
    \item Suppose that $(0,+\infty) \cap \Hmb\left([0,+\infty)\right) \neq \emptyset$ and $\Psi$ has full domain. Then both $J \in \Gamma_0(\mathbb{R}^n)$ and $J' \in \Gamma_0(\mathbb{R}^L)$.
  \end{enumerate}
\end{propo}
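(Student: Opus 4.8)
The four items are essentially independent, so the plan is to dispatch them one at a time, leaving the real work for (iv). For (i), I would exploit that $f_1$ is separable, $f_1(\eta)=\sum_{i=1}^n f_{\mathrm{poisson}}(\eta[i])$, and check convexity of each scalar summand: when $y[i]>0$, $f_{\mathrm{poisson}}$ equals $-y[i]\log(\cdot)+(\cdot)$ on $(0,+\infty)$ (and $+\infty$ elsewhere), which is convex because $-\log$ is convex on $(0,+\infty)$ and a linear term is added; when $y[i]=0$, $f_{\mathrm{poisson}}=(\cdot)+\imath_{[0,+\infty)}$, again convex. A sum of convex functions is convex, and precomposition with the linear maps $\Hmb$ and $\Hmb\circ\Phb$ preserves convexity, which gives the three assertions.

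For (ii), when $y[i]>0$ one has $f_{\mathrm{poisson}}''(t)=y[i]/t^{2}>0$ on $(0,+\infty)$, so if every $y[i]\neq 0$ then $f_1$ is strictly convex on its domain. I would then invoke the elementary fact that if $L$ is an injective linear map and $g$ is strictly convex, $g\circ L$ is strictly convex (distinct points have distinct images, along which the strict inequality survives). Applying this with $L=\Hmb$ when $\Hmb$ is injective, and with $L=\Hmb\circ\Phb$ when moreover $\Phb$ is an orthobasis (hence invertible), yields the two remaining claims. Item (iii) is then a one-line differentiation of the explicit formulas on the interior of $\dom f_{\mathrm{poisson}}$: $\tfrac{d}{dt}\big(-y[i]\log t+t\big)=1-y[i]/t$ for $y[i]>0$, $t>0$, and $\tfrac{d}{dt}\,t=1$ for $y[i]=0$, $t>0$; separability gives the stated coordinatewise expression for $\nabla f_1$.

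For (iv) one must verify the three defining properties of $\Gamma_0$. Convexity of $J$ and $J'$ follows from (i) together with convexity of $\eqi\Psi\circ\Phb\trans$ and $\imath_{\Cc}$ (resp.\ $\eqi\Psi$ and $\imath_{\Cc}\circ\Phb$) and stability of convexity under addition. For lower semicontinuity: each $f_{\mathrm{poisson}}$ is lsc — for $y[i]>0$ it tends to $+\infty$ as $t\to 0^{+}$, matching its boundary value, and for $y[i]=0$ it is a continuous function plus the indicator of a closed set — hence $f_1$ is lsc, and lower semicontinuity is preserved by precomposition with the continuous maps $\Hmb$, $\Hmb\circ\Phb$ and by addition with the lsc functions $\eqi\Psi\circ\Phb\trans$, $\imath_{\Cc}$, etc. The one substantive point is \emph{properness}. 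Note that $\dom f_1$ is exactly the set of $\eta$ that are $>0$ on $\{i:y[i]>0\}$ and $\ge 0$ on $\{i:y[i]=0\}$; in particular every strictly positive vector lies in $\dom f_1$. By the hypothesis $(0,+\infty)\cap\Hmb([0,+\infty))\neq\emptyset$ there is $x\ge 0$ with $\Hmb x$ strictly positive in every coordinate; then $x\in\Cc$, $\Hmb x\in\dom f_1$, and $\Psi(\Phb\trans x)<+\infty$ since $\Psi$ has full domain, so $J(x)<+\infty$. For $J'$ I would use that the lower frame bound $c_1>0$ makes $\Phb\trans$ injective, hence $\Phb$ surjective onto $\mathbb{R}^n$, so there is $\va$ with $\Phb\va=x$ (e.g.\ $\va=\Phb\trans(\Phb\Phb\trans)^{-1}x$); then $\Phb\va\in\Cc$, $\Hmb\Phb\va=\Hmb x\in\dom f_1$ and $\Psi(\va)<+\infty$, whence $J'(\va)<+\infty$. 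Since none of the summands ever takes the value $-\infty$, both $J$ and $J'$ are proper, so $J\in\Gamma_0(\mathbb{R}^n)$ and $J'\in\Gamma_0(\mathbb{R}^L)$.

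Everything except properness is routine bookkeeping with convexity/lsc calculus. The main obstacle is the nonemptiness-of-domain step: one has to recognize that the hypothesis $(0,+\infty)\cap\Hmb([0,+\infty))\neq\emptyset$ is precisely what is needed to meet the strict-positivity requirements hidden inside $\dom f_1$ while staying in the positivity constraint $\Cc$, and that the frame lower bound is exactly what supplies the surjectivity of $\Phb$ needed to transfer that feasible point over to the synthesis formulation.
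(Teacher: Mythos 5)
Your proposal is correct and follows essentially the same route as the paper's (much terser) proof: convexity and strict convexity via standard convexity calculus and injective precomposition, the gradient by direct differentiation, and properness of $J$ and $J'$ from the domain hypothesis together with surjectivity of the frame $\Phb$. The only difference is that you spell out the details (the separable structure of $f_1$, the explicit feasible point, and $\va=\Phb\trans(\Phb\Phb\trans)^{-1}x$) that the paper leaves implicit.
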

\begin{proof}
  The proof of (i) and (ii) follow from well-known properties of convex functions. (iii) is obtained by straightforward
  calculations. (iv): $J$ and $J'$ are the sums of lsc convex functions which entails their convexity and lower
  semicontinuity. $\Phb$ is a frame, hence surjective. Thus under the assumptions of (iv), the domains of both $J$ and
  $J'$ are non-empty, i.e. proper functions.
\end{proof}

\subsection{Well-posedeness of \eqref{eq:11} and \eqref{eq:12}}
\label{sec:char-solut}

Let $\Mc$ and $\Mc'$ be respectively the sets of minimizers of problems \eqref{eq:11} and \eqref{eq:12}. Suppose that $\exists i \in \{1,\ldots,n\}$ such that $y[i] \ne 0$, otherwise the minimizer would be trivially 0. Thus $J$ and $J'$ are coercive ($\Phb\Tr$ is injective). Therefore, the following holds:
\begin{propo}
  {~} \\ \vspace{-0.5cm}
  \begin{enumerate}
  \item Existence: each of \eqref{eq:11} and \eqref{eq:12} has at least one solution, i.e. $\Mc\ne\emptyset$ and $\Mc'\ne\emptyset$.
  \item Uniqueness: \eqref{eq:11} and \eqref{eq:12} have unique solutions if $\Psi$ is strictly convex, or under (ii) of Proposition~\ref{prop:objectives}.
  \end{enumerate}
\end{propo}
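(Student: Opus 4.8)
The plan is to establish the two claims --- existence and uniqueness --- by the standard direct method of the calculus of variations, relying on the properties already collected in Proposition~\ref{prop:objectives}.

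First I would settle existence. By Proposition~\ref{prop:objectives}(iv), under the stated assumption that $(0,+\infty)\cap\Hmb([0,+\infty))\neq\emptyset$ (and $\Psi$ has full domain), both $J$ and $J'$ lie in $\Gamma_0$, so they are proper, lsc and convex. It remains to verify coercivity, which the excerpt already asserts but which I would spell out. Pick an index $i$ with $y[i]\neq 0$; then the term $f_{\mathrm{poisson}}(\cdot)$ acting on $(\Hmb\vx)[i]$ behaves like $\eta\mapsto -y[i]\log\eta+\eta$ on $(0,+\infty)$ and is $+\infty$ elsewhere, hence it is itself coercive on $\mathbb{R}$. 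The non-negativity indicator $\imath_\Cc$ confines $\vx$ (resp. $\Phb\va$) to the positive orthant, so on the feasible set every coordinate of $\Hmb\vx$ is non-negative and the remaining Poisson terms and the penalty are bounded below; thus $J(\vx)\to+\infty$ whenever $\norm{\vx}\to+\infty$. For the synthesis functional, since $\Phb$ is a frame, $\Phb\Tr$ is injective and $\norm{\va}\to+\infty$ forces either $\norm{\Phb\va}\to+\infty$ through the fidelity/constraint terms or $\Psi(\va)\to+\infty$ through the full-domain coercive penalty; in either case $J'(\va)\to+\infty$. A proper lsc convex coercive function on a finite-dimensional space attains its infimum, so $\Mc\neq\emptyset$ and $\Mc'\neq\emptyset$.

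Next, uniqueness. The set of minimizers of a convex function is convex; uniqueness follows as soon as the objective is strictly convex on (an affine set containing) that minimizer set. If $\Psi$ is strictly convex, then $f_2=\eqi\Psi$ (resp. $f_2\circ\Phb\Tr$, using injectivity of $\Phb\Tr$) is strictly convex, and adding the convex terms $f_1\circ\Hmb$ and $f_3$ preserves strict convexity, giving a unique minimizer for both problems. Alternatively, under Proposition~\ref{prop:objectives}(ii), i.e. $y[i]\neq 0$ for all $i$ and $\kerm(\Hmb)=\{0\}$, the fidelity term $f_1\circ\Hmb$ is itself strictly convex, and again adding the remaining convex terms yields strict convexity of $J$; for $J'$ one additionally needs $\Phb$ to be an orthobasis so that $f_1\circ\Hmb\circ\Phb$ stays strictly convex, as stated in Proposition~\ref{prop:objectives}(ii). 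In each case strict convexity of the objective forces the convex minimizer set to be a singleton.

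The only genuinely delicate point is coercivity: one must check that the combination of the Poisson data term (which is coercive only in the directions "seen" by $\Hmb$ with non-zero data) together with the positivity constraint and the penalty $\Psi$ is enough to control all directions of $\mathbb{R}^n$ (resp.\ $\mathbb{R}^L$). For the analysis problem this needs a short argument exploiting that $\imath_\Cc$ pins $\vx$ to the positive orthant so no sequence can escape to infinity while keeping $J$ bounded; for the synthesis problem the injectivity of $\Phb\Tr$ (equivalently, $\Phb$ being a frame) plus the full domain of $\Psi$ does the job. Everything else is routine bookkeeping with the results already in hand.
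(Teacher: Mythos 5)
Your skeleton (existence by the direct method using Proposition~\ref{prop:objectives}(iv) plus coercivity; uniqueness from strict convexity of $\Psi$ or of the fidelity under Proposition~\ref{prop:objectives}(ii)) is exactly the paper's route, and your uniqueness paragraph is correct, including the remark that for \eqref{eq:12} the second branch needs $\Phb$ orthonormal. The genuine gap is in the coercivity step, which is the only non-routine point and the one you yourself flag as delicate. For the analysis problem you argue that the single coercive scalar term $f_{\mathrm{poisson}}\big((\Hmb\vx)[i]\big)$ with $y[i]\neq 0$, together with lower-boundedness of the remaining terms, yields $J(\vx)\to+\infty$ as $\norm{\vx}\to+\infty$ on the positive orthant. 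That is a non sequitur: $\norm{\vx}\to+\infty$ with $\vx\ge 0$ does not force $(\Hmb\vx)[i]\to+\infty$ (let $\vx$ grow along coordinates where the $i$-th row of $\Hmb$ vanishes, or along any recession direction of $\Cc$ not "seen" by that row); nothing assumed on the PSF excludes this, so the fidelity term alone cannot certify coercivity. The paper's parenthetical points to the other mechanism: the penalty. Any nontrivial $\psi$ of the assumed additive, even, convex class with $\psi(0)=0$ (in particular with $\psi'_+(0)>0$, e.g. $\psi=\abs{\cdot}$) satisfies $\psi(t)\ge c\abs{t}$ for some $c>0$, hence $\Psi$ is coercive; since $\Phb$ is a frame, $\Phb\Tr$ is injective, so $\Psi\circ\Phb\Tr$ is coercive in $\vx$, while the Poisson terms are merely bounded below on their domain. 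That is the coercivity proof for $J$.

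Your synthesis-side argument is also off in two places. First, $\norm{\va}\to+\infty$ does not force $\norm{\Phb\va}\to+\infty$: a redundant dictionary has $\kerm\Phb\neq\{0\}$, and along $\kerm\Phb$ both $f_1\circ\Hmb\circ\Phb$ and $\imath_\Cc\circ\Phb$ are constant, so only the penalty can blow up there; moreover injectivity of $\Phb\Tr$ is the wrong property to invoke here (it is equivalent to surjectivity of $\Phb$, not injectivity). Second, "full domain" of $\Psi$ does not imply coercivity of $\Psi$; coercivity must come from the structure of $\psi$ as above (or be assumed outright). Once you replace your dichotomy by the simple statement that $\Psi$ is coercive in $\va$ and the other two terms are bounded below on the feasible set, coercivity of $J'$ follows, and existence is obtained as you state (a proper, lsc, coercive function on a finite-dimensional space attains its infimum). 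With that repair your proof coincides with the paper's argument, which asserts precisely this penalty-based coercivity (the condition $\exists i,\ y[i]\neq 0$ serves only to exclude the trivial zero solution, not to produce coercivity).
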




\section{Iterative minimization algorithm}
\label{sec:minimizationalgorithm}

We are now ready to describe the proximal splitting algorithm to solve \eqref{eq:11} and \eqref{eq:12}. At the heart of
this splitting framework is the notion of proximity operator which we are about to introduce.

\subsection{Proximity operator}
The proximity operator is a generalization of the projection onto a closed convex set, introduced by
Moreau~\cite{Moreau1962,Moreau1963,Moreau1965},
\begin{definition}[Moreau\cite{Moreau1962}]
  \label{def:1} 
  Let $f \in \Gamma_{0}(\Hc)$. Then, for every $x\in\Hc$, the function $y \mapsto
  f(y) + \norm{x-y}^{2}/2$ achieves its infimum at a unique point denoted by
  $\prox_{f}x$. The operator $\prox_{f} : \Hc \to \Hc$ thus defined is the
  \textit{proximity operator} of $f$.  
\end{definition}

The proximity operator enjoys many properties, among these, one can easily show that $\forall x,p \in \Hc$
  \begin{align}
    \label{eq:prox}
    p = \prox_{f} x \iff x-p \in \partial f(p) \iff \pds{y-p}{x-p} + f(p) \le f(y) ~ \forall y\in\Hc ~,
  \end{align}
  which means that $\prox_{f}$ is the resolvent of the subdifferential of
  $f$ \cite{Eckstein92}. Recall that the resolvent of the subdifferential
  $\partial f$ is the single-valued operator $J_{\partial f}: \Hc \to \Hc$ such
  that $\forall x \in \Hc, x - J_{\partial f}(x)
  \in \partial f(J_{\partial f}) \iff J_{\partial f} = (\Id
  + \partial f)^{-1}$.

\subsection{A splitting algorithm for sums of convex functions}
\label{sec:gener-minim-algor}


Suppose that the objective function can be expressed as the sum of $K$ convex, lsc and proper functions,
\begin{equation}
  \label{eq:5}
  \argmin_{x \in \Hc} ~ (f(x) = \sum_{i=1}^K f_i(x))~.
\end{equation}
Proximal splitting methods for solving \eqref{eq:5} are iterative algorithms which may evaluate the individual proximity
operators $\prox_{f_i}$ (perhaps approximately), but never proximity operators of sums of the $f_i$, and a fortiori that
of $f$.  The key idea relies on the formulation of \eqref{eq:5} which is such that the proximity operator of each
individual function has some (relatively) convenient structure, while those of the sums do not in general. This turns
out to be true in our case for problems \eqref{eq:11} and \eqref{eq:12}.

Splitting algorithms have an extensive literature since the 1970's, where the case $K=2$ predominates. Usually,
splitting algorithms handling $K > 2$ have either explicitly or implicitly relied on reduction of \eqref{eq:5} to the
case $K = 2$ in the product space $\Hc^K$. For instance, applying the Douglas-Rachford splitting to the reduced form
produces Spingarn's method \cite{Spingarn}, which performs independent proximal steps on each $f_i$, and then computes
the next iterate by essentially averaging the individual proximity operators. The method proposed in
\cite{Combettes2008} is very similar in spirit to Spingarn's method, where moreover, it allows for inexact computation
of the proximity operators.

For every $i \in \{1,\ldots,K\}$, let $(a_{t,i})_{t\in\mathbb{N}}$ be a sequence in $\Hc$. Let $(x_t)_{t\in\mathbb{N}}$
be the sequence as constructed by Algorithm~\ref{algo:proxgen}.  The following result due to \cite{Combettes2008}
establishes the convergence of $(x_t)_{t\in\mathbb{N}}$. We reproduce it here for the sake of completeness.
\begin{theorem}[\cite{Combettes2008}]
  \label{th:convalg}
  Suppose that $f$ is coercive and let $(x_t)_{t\in\mathbb{N}}$ be a sequence
  generated by Algorithm~\ref{algo:proxgen} under the following assumptions,
  \begin{enumerate}
  \item $f$ is a proper function;
  \item $(0,\ldots,0)\in \mathrm{ri}\left\{ (x-x_1,\ldots,x-x_K)\mid x\in\Hc,x_1\in\dom f_1,\ldots,x_K\in\dom f_K\right\}$;
  \item $\sum_{t\in\mathbb{N}} \theta_t(2-\theta_t) = +\infty$;
  \item $\forall i \in  \{1,\ldots,K\}\quad \sum_{t\in\mathbb{N}} \theta_t \norm{a_{(t,i)}} < +\infty$.
  \end{enumerate}
  Then $(x_t)_{t\in\mathbb{N}}$ converges to a (non-strict) global minimizer.
\end{theorem}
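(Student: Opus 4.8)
The plan is to reduce the $K$-function problem \eqref{eq:5} to a two-function problem in a product space and then invoke the convergence theory of the relaxed, error-tolerant Douglas--Rachford iteration. Concretely, I would endow $\Hc^K$ with the weighted inner product $\pds{\mathbf{x}}{\mathbf{y}}=\sum_{i=1}^K\omega_i\pds{x_i}{y_i}$, where the $\omega_i>0$, $\sum_i\omega_i=1$, are the convex-combination weights entering Algorithm~\ref{algo:proxgen}, and introduce $\boldsymbol{f}:\mathbf{x}\mapsto\sum_{i=1}^K f_i(x_i)$ together with the indicator $\imath_D$ of the diagonal subspace $D=\{(x,\ldots,x)\mid x\in\Hc\}$. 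Both $\boldsymbol{f}$ and $\imath_D$ belong to $\Gamma_0(\Hc^K)$ (assumption~1 and closedness of $D$), and minimizing $f=\sum_i f_i$ over $\Hc$ is equivalent to minimizing $\boldsymbol f+\imath_D$ over $\Hc^K$, with the exact correspondence $\argmin(\boldsymbol f+\imath_D)=\{(x,\ldots,x)\mid x\in\argmin f\}$.

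Second, I would compute the two proximity operators in this metric: $\prox_{\imath_D}$ is the orthogonal projection onto $D$, namely $\mathbf{x}\mapsto(\bar x,\ldots,\bar x)$ with $\bar x=\sum_i\omega_i x_i$, while $\prox_{\boldsymbol f}$ acts separably, $(\prox_{\boldsymbol f}\mathbf{x})_i=\prox_{f_i/\omega_i}(x_i)$. With these identities in hand, I would write out, coordinate by coordinate, the relaxed Douglas--Rachford recursion for $(g_1,g_2)=(\imath_D,\boldsymbol f)$,
\[
  \mathbf{y}_{t+1}=\mathbf{y}_t+\theta_t\bigl(\prox_{\boldsymbol f}(2\prox_{\imath_D}\mathbf{y}_t-\mathbf{y}_t)-\prox_{\imath_D}\mathbf{y}_t\bigr)+\theta_t\mathbf{b}_t,
\]
and check that it reproduces Algorithm~\ref{algo:proxgen}: the exact step is the averaging $\prox_{\imath_D}$ (so the sequence the theorem calls $x_t$ is exactly $\prox_{\imath_D}\mathbf y_t=\sum_i\omega_i x_{t,i}$), while the individual perturbations $a_{t,i}$ are the componentwise errors in the inexact evaluation of the separable operator $\prox_{\boldsymbol f}$, collected into $\mathbf b_t$.

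Third, I would apply the known convergence theorem for inexact relaxed Douglas--Rachford splitting (see \cite{Eckstein92,Combettes2008}): if $g_1,g_2\in\Gamma_0$, if $\mathrm{zer}(\partial g_1+\partial g_2)\neq\emptyset$, if $\sum_t\theta_t(2-\theta_t)=+\infty$ and the errors are absolutely summable, then $\mathbf y_t$ converges to a fixed point $\mathbf y^\star$ with $\prox_{g_1}\mathbf y^\star\in\argmin(g_1+g_2)$. The translation is then: condition~3 is reproduced verbatim; the error condition follows from condition~4 since $\norm{\mathbf b_t}^2=\sum_i\omega_i\norm{a_{t,i}}^2\le\bigl(\sum_i\norm{a_{t,i}}\bigr)^2$, whence $\sum_t\theta_t\norm{\mathbf b_t}<+\infty$; and condition~2 is precisely the statement $0\in\ri(D-\dom\boldsymbol f)$ because $\dom\boldsymbol f=\dom f_1\times\cdots\times\dom f_K$ and $\{(x-x_1,\ldots,x-x_K)\}=D-\dom\boldsymbol f$. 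This relative-interior condition is exactly the qualification yielding the subdifferential sum rule $\partial(\boldsymbol f+\imath_D)=\partial\boldsymbol f+\partial\imath_D$, so $\mathrm{zer}(\partial g_1+\partial g_2)=\mathrm{zer}(\partial(\boldsymbol f+\imath_D))=\argmin(\boldsymbol f+\imath_D)$; coercivity of $f$ makes this set nonempty, so the Douglas--Rachford hypothesis holds. Since $\prox_{\imath_D}\mathbf y^\star\in D$, it is of the form $(x^\star,\ldots,x^\star)$ with $x^\star\in\argmin f$, i.e. $x_t\to x^\star$, a (non-strict) global minimizer; in finite dimension the weak convergence supplied by the abstract theorem is norm convergence.

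The main obstacle is the bookkeeping in the product-space identification: verifying that Algorithm~\ref{algo:proxgen} is a faithful realization of the error-tolerant relaxed Douglas--Rachford iteration for $(\imath_D,\boldsymbol f)$ in the weighted metric --- in particular that the scalar perturbation sequences $(a_{t,i})_i$ assemble into a legitimate summable error sequence $(\mathbf b_t)$ for $\prox_{\boldsymbol f}$ --- and, chiefly, recognizing that condition~2 is precisely the relative-interior constraint qualification required for the sum rule, so that the Douglas--Rachford fixed-point set coincides with the minimizer set rather than being possibly strictly larger (or empty). Everything else --- the separability of $\prox_{\boldsymbol f}$, the closed form of the projection onto $D$, and the equivalence between the product and original minimization problems --- is routine.
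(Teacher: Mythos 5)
The paper does not prove this theorem at all: it is quoted verbatim from \cite{Combettes2008}, so the relevant comparison is with the proof in that reference, and your strategy (weighted product space $\Hc^K$, $\boldsymbol f:\mathbf{x}\mapsto\sum_i f_i(x_i)$ plus the indicator $\imath_D$ of the diagonal, separable proximity operator $\prox_{\mu f_i/\omega_i}$, projection onto $D$ given by the $\omega$-weighted average, reading condition~2 as $0\in\ri(D-\dom\boldsymbol f)$ to get the subdifferential sum rule, and coercivity to guarantee a minimizer exists) is exactly the one used there. Your handling of the qualification condition, of the error summability, and of the passage from weak to norm convergence in finite dimension is correct.

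There is, however, one concrete bookkeeping error in the step you claim to have ``checked'': with your ordering $(g_1,g_2)=(\imath_D,\boldsymbol f)$ the Douglas--Rachford recursion evaluates the separable prox at the \emph{reflected} points $2x_t-p_{(t,i)}$ and subtracts $x_t$, which is not Algorithm~\ref{algo:proxgen}; the algorithm evaluates $\prox_{\mu f_i/\omega_i}$ at $p_{(t,i)}$ itself. The faithful identification is the opposite order, $\mathbf{p}_{t+1}=\mathbf{p}_t+\theta_t\bigl(P_D(2\prox_{\mu\boldsymbol f}\mathbf{p}_t-\mathbf{p}_t)-\prox_{\mu\boldsymbol f}\mathbf{p}_t\bigr)$, since $P_D(2\boldsymbol\xi_t-\mathbf{p}_t)$ has all components equal to $2\xi_t-x_t$ (one verifies by induction that $x_t=P_D\mathbf{p}_t$, as you state). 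Two small consequences follow. First, the inexactness $a_{(t,i)}$ then sits in the \emph{inner} resolvent and also propagates through the outer operator $P_D$; since $P_D$ is linear and nonexpansive the induced perturbation of the exact DR map is still dominated by $\sum_i\norm{a_{(t,i)}}$, so condition~4 still gives summability, but this needs a sentence. Second, the standard DR conclusion now concerns the shadow sequence $\prox_{\mu\boldsymbol f}\mathbf{p}_t\to(x^\star,\ldots,x^\star)$ with $x^\star$ a minimizer of $f$, not directly $x_t=P_D\mathbf{p}_t$; you recover your claim from the fixed-point identity $\prox_{\mu\boldsymbol f}\mathbf{p}^\star=P_D(2\prox_{\mu\boldsymbol f}\mathbf{p}^\star-\mathbf{p}^\star)$, which forces $P_D\mathbf{p}^\star=(x^\star,\ldots,x^\star)$ and hence $x_t\to x^\star$ by continuity of $P_D$. (Also keep the factor $\mu$ in $\prox_{\mu\boldsymbol f}$, which you dropped.) With these repairs your argument is a correct reconstruction of the cited proof.
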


\begin{algorithm}[htbp]
  \noindent{\bf{Task:}} Solve the convex optimization problem~\eqref{eq:5}.\\
  \noindent{\bf{Initialization:}} \\
  Choose $\mu \in (0,+\infty)$, $(p_{(0,i)})_{1\le i \le K} \in \Hc^K$,
  $(\omega_i)_{1\le i\le K} \in (0,1]^K \text{ satisfying } \sum_{i=1}^K \omega_i = 1$ and
  $x_0 = \sum_{i=1}^K \omega_i p_{(0,i)}$. \\
  \noindent{\bf{Main iteration:}} \\
  \noindent{\bf{For}} $t=0$ {\bf{to}} $N_{\mathrm{ext}}-1$,
  \begin{align*}
    & \forall i \in  \{1,\ldots,K\}\quad \xi_{(t,i)} = \prox_{\mu f_i/\omega_i} p_{(t,i)} + a_{(t,i)}~, \\
    & \xi_{t} = \sum_{i=1}^K \omega_i \xi_{(t,i)}~, \\
    & \text{Choose } \theta_t \in ]0,2[~, \\
    & \forall i \in  \{1,\ldots,K\}\quad p_{(t+1,i)} = p_{(t,i)} + \theta_t\left( 2\xi_t - x_t - \xi_{(t,i)} \right)~, \\
    & x_{t+1} = x_t + \theta_t (\xi_t - x_t)~.
  \end{align*}
  \noindent{\bf{End main iteration}} \\
  \noindent{\bf{Output:}} A solution of \eqref{eq:5}: $x_{N_{\mathrm{ext}}}$.
  \caption{}
  \label{algo:proxgen}
\end{algorithm}

Notice that the sequences $(a_{(t,i)})_{t\in\mathbb{N}}$ allow for some errors in the evaluation of the different
proximity operators, with the proviso that these errors remain summable. This is useful when the closed forms of the proximity operators are difficult to construct and rather, must be computed via an inner iteration. This will turn to be the case for some terms involved in the two objective functions of \eqref{eq:11} and \eqref{eq:12}. More precisely, the difficulty at stake is how to deal with the composition with a bounded linear operator, here the circular convolution operator and/or the dictionary. This is what we are about to handle.

\subsection{Proximity operator of a pre-composition with a linear operator}
\label{sec:prox-oper-funct}

The following result will manifest its importance as a building block to solve the two problems \eqref{eq:11} and \eqref{eq:12}. Indeed, it expresses the proximity operator of a function $f\in\Gamma_0(\Hc)$ composed with an affine operator $\Amb : \RR^n \to \RR^m, x \mapsto \Fmb x - y$, $y\in\RR^m$ where $\Fmb: \RR^n \to \RR^m$ is a bounded linear operator.
\begin{theorem}
  {~}\\
  \label{th:compo}
  Let $\Fmb$ be a linear bounded operator such that the domain qualification condition $\ri(\dom(f) \cap \mathrm{Im}(\Amb)) \ne
  \emptyset$ holds. Then  $f \circ \Amb \in \Gamma_0(\RR^m)$ and
  \begin{enumerate}[(i)]
  \item If $\Fmb$ is a tight frame. Then,
    \begin{equation}
      \label{eq:86}
      \prox_{f\circ \Amb}(x) = y + c^{-1}\Fmb\trans(\prox_{cf}-\Id) \Amb(x).
    \end{equation}
  \item If $\Fmb$ is a general frame. Let $\desc_t \in (0,2/c_2)$. Let $(u_t)_{t\in\mathbb{N}}$ be sequence of iterates provided by Algorithm~\ref{algo:proxalg1}. Then, $(u_t)_{t\in\mathbb{N}}$ converges to $\bar{u}$
    and $(p_t)_{t\in\mathbb{N}}$ converges to $\prox_{f\circ\Amb} x = x -
    \Fmb\trans\bar{u}$. More precisely, these two sequences converge linearly and the best
    rate is attained for $\desc_t \equiv 2/(c_1+c_2)$:
    \begin{equation}
      \label{eq:52}
      \norm{p_t - \prox_{f\circ\Amb}(x)} \le \sqrt{\frac{c_2}{c_1}} \left(\frac{c_2-c_1}{c_2+c_1}\right)^t
      \norm{p_0 - \prox_{f\circ\Amb}(x)}~.
    \end{equation}
  \item In all other cases, apply Algorithm~\ref{algo:proxalg1}
    with $\desc_t\in(0,2/c_2)$. Then, $(u_t)_{t\in\mathbb{N}}$ converges to $\bar{u}$, and
    $(p_t)_{t\in\mathbb{N}}$ converges to $\prox_{f\circ\Amb} x = x - \Fmb\trans\bar{u}$ at the rate
    $\Oc(1/t)$.
    \end{enumerate}
\end{theorem}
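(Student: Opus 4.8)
The plan is to derive everything from the Fenchel–Rockafellar duality for the proximity operator of a pre-composition, and then specialize according to the structure of $\Fmb$. First I would note that the domain qualification $\ri(\dom(f)\cap\mathrm{Im}(\Amb))\ne\emptyset$ guarantees $f\circ\Amb\in\Gamma_0(\RR^m)$ and, more importantly, validates the subdifferential chain rule $\partial(f\circ\Amb)=\Fmb\trans\circ\partial f\circ\Amb$. Starting from the definition $p=\prox_{f\circ\Amb}(x)$, the characterization \eqref{eq:prox} gives $x-p\in\partial(f\circ\Amb)(p)=\Fmb\trans\partial f(\Amb p)$, so there exists a dual variable $u$ with $\Amb p=\Fmb p - y$ and $x-p=\Fmb\trans u$, $u\in\partial f(\Amb p)$. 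Eliminating $p=x-\Fmb\trans u$, this is exactly the optimality condition of the dual problem $\min_{u}\; \tfrac12\norm{\Fmb\trans u}^2 - \pds{x}{\Fmb\trans u} + f^*(u) + \pds{y}{u}$ (up to constants), whose objective I will call $\Dc(u)$; write $\bar u$ for a minimizer and recall $\prox_{f\circ\Amb}(x)=x-\Fmb\trans\bar u$.

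For part (i), when $\Fmb$ is a tight frame, $\Fmb\Fmb\trans=c\,\Id$, so the quadratic part of $\Dc$ restricted to $\mathrm{Im}(\Fmb)$ is isotropic: writing $u=\Fmb v$ one checks $\Dc$ decouples and the optimality condition $c\,\bar u - \Fmb x + \Fmb y \in -\partial f^*(\bar u)$, i.e. $\Fmb\Amb x \in \bar u + \partial f^*(\bar u) \cdot c$ — rearranging and using $\prox_{cf}+c\,\prox_{(cf)^*\!/c\,\cdots}$ — the Moreau identity $\prox_{cf}+c\,\prox_{f^*/c}(\cdot/c)\, $ applied appropriately yields $\bar u = c^{-1}(\Id-\prox_{cf})(\Amb x)$. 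Substituting into $\prox_{f\circ\Amb}(x)=x-\Fmb\trans\bar u$ and using $x=\Amb^{+}\!\cdots$ — more directly, $x - c^{-1}\Fmb\trans(\Id-\prox_{cf})\Amb x = y + c^{-1}\Fmb\trans\prox_{cf}(\Amb x) - c^{-1}\Fmb\trans\Amb x + (x-y)$, and since $\Fmb\trans\Amb x=\Fmb\trans\Fmb x - \Fmb\trans y$ one must verify the bookkeeping collapses to \eqref{eq:86}; I expect a clean cancellation once the tight-frame identity $\Fmb\trans\Fmb$ is handled on the relevant subspace, noting that $\bar u$ may be taken in $\mathrm{Im}(\Fmb)$.

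For parts (ii) and (iii), I would solve the dual $\min_u\Dc(u)$ by forward–backward splitting: the smooth part $u\mapsto\tfrac12\norm{\Fmb\trans u}^2-\pds{x}{\Fmb\trans u}$ has gradient $\Fmb(\Fmb\trans u - x)$ which is $\opnorm{\Fmb\Fmb\trans}$-Lipschitz, bounded by $c_2$ in the frame case; the nonsmooth part $u\mapsto f^*(u)+\pds{y}{u}$ has an easily computed proximity operator (a shift of $\prox_{f^*}$, itself given by Moreau's identity from $\prox_f$). This is precisely Algorithm~\ref{algo:proxalg1} with step sizes $\desc_t\in(0,2/c_2)$, so I would invoke the standard convergence theory of forward–backward iterations: $\Oc(1/t)$ on the objective in general (part (iii)), and, when $\Fmb$ is a frame, the lower bound $c_1\norm{u}^2\le\norm{\Fmb\trans u}^2$ on $\mathrm{Im}(\Fmb)$ makes the smooth part strongly convex there, so the iteration is a contraction with factor $\max\{|1-\desc_t c_1|,|1-\desc_t c_2|\}$, optimized at $\desc_t\equiv 2/(c_1+c_2)$ to give $(c_2-c_1)/(c_2+c_1)$; the prefactor $\sqrt{c_2/c_1}$ in \eqref{eq:52} comes from converting the contraction on $u_t$ into one on $p_t=x-\Fmb\trans u_t$ via the frame bounds. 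The main obstacle, I expect, is the careful handling of the fact that $\bar u$ is only unique modulo $\kerm(\Fmb\trans)$ when $\Fmb$ is a genuinely redundant frame: one must argue that the iterates stay in (or converge within) $\mathrm{Im}(\Fmb)$ where strong convexity holds, so that the linear rate is legitimate and $\Fmb\trans\bar u$ — hence $\prox_{f\circ\Amb}(x)$ — is nonetheless well defined; the rest is an assembly of known proximal-calculus identities and the textbook forward–backward convergence rates.
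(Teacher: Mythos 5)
Your proposal follows essentially the same route as the paper: pass to the Fenchel--Rockafellar dual of the proximal problem (the paper invokes duality directly, you derive the same dual via the chain rule and \eqref{eq:prox} --- same thing), solve it in closed form via Moreau's identity when $\Fmb\Fmb\trans=c\Id$, and otherwise run forward--backward on the dual, which is exactly Algorithm~\ref{algo:proxalg1}, with the linear rate coming from the spectrum of $\Fmb\Fmb\trans$ and the $\Oc(1/t)$ rate from the standard non-strongly-convex theory. Two clarifications. First, in part (i) your dual solution $\bar{u}=c^{-1}(\Id-\prox_{cf})(\Amb x)$ coincides with the paper's; substituting into $\bar{p}=x-\Fmb\trans\bar{u}$ gives $\prox_{f\circ\Amb}(x)=x+c^{-1}\Fmb\trans(\prox_{cf}-\Id)\Amb(x)$, so the ``clean cancellation'' to the printed formula \eqref{eq:86} you were hoping for does not exist: the leading ``$y$'' there is evidently a typo for ``$x$'', and your derivation (like the paper's own proof) establishes the corrected formula; no further bookkeeping is needed. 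Second, the obstacle you flag in (ii) --- $\bar{u}$ unique only modulo $\kerm(\Fmb\trans)$, forcing an argument that iterates stay in the range of $\Fmb$ --- is moot under the paper's frame convention: the frame inequality is $c_1\norm{u}^2\le\norm{\Fmb\trans u}^2\le c_2\norm{u}^2$ for \emph{every} dual vector $u$, i.e.\ $c_1\Id\preceq\Fmb\Fmb\trans\preceq c_2\Id$, so $\kerm(\Fmb\trans)=\{0\}$, the dual objective is globally strongly convex, $\bar{u}$ is unique, and your contraction factor $\max\{\abs{1-\desc_t c_1},\abs{1-\desc_t c_2}\}$ together with the $\sqrt{c_2/c_1}$ conversion between $u$- and $p$-errors gives \eqref{eq:52} exactly (this is the classical projected-gradient-type argument the paper cites); had the restriction to $\mathrm{Im}(\Fmb)$ really been needed, your fix would not be automatic, since the proximal step need not preserve that subspace. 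Finally, for (iii) note that the off-the-shelf $\Oc(1/t)$ guarantee is on the dual objective; passing to the iterate statement for $p_t$ (e.g.\ via $\tfrac{1}{2}\norm{p_t-\bar{p}}^2\le D(u_t)-D(\bar{u})$, where $D$ is the dual objective) is the ``technical'' part the paper delegates to its citation, so your one-line appeal is at the same level of detail as the paper itself.
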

See the appendix for a concise proof.\\

\begin{algorithm}[t]
  \noindent{\bf{Task:}} Forward-backward algorithm to compute $\prox_{f \circ \Amb}(x)$. \\
  \noindent{\bf{Parameters:}} The function $f$, the linear bounded operator $\Amb$, number of iterations $N_{\mathrm{int}}$ and step-size $\desc_t\in(0,2/c_2)$. \\
  \noindent{\bf{Initialization:}}
  Choose $u_0 \in\mathbb{R}^m$, $p_0=x-\Fmb\trans u_0$. \\
  \noindent{\bf{Main iteration:}} \\
  \noindent{\bf{For}} $t=0$ {\bf{to}} $N_{\mathrm{int}}-1$,
  \begin{equation}
    \label{eq:1}
    \begin{split}
      u_{t+1} &= \desc_t (\Id - \prox_{f/\desc_t})(u_t/\desc_t + \Amb p_t), \\
      p_{t+1} &= x - \Fmb\trans u_{t+1}.
    \end{split}
  \end{equation}
  \noindent{\bf{End main iteration}} \\
  \noindent{\bf{Output:}} The proximity operator of $f\circ\Amb$ at $x$~: $p_{N_{\mathrm{int}}}$~.
  \caption{}
  \label{algo:proxalg1}
\end{algorithm}

When $\Fmb$ is a frame, the convergence speed of the one-step forward-backward scheme \cite{Gabay83,FadiliStarckBook09} depends clearly on the redundancy of the frame. The higher the redundancy, the slower the convergence, i.e. the number of necessary iterations to obtain an $\varepsilon$-accurate solution is $\mathcal{O}\left(\frac{c_1}{c_2}\log \varepsilon^{-1}\right)$. For the general case (iii) where $\Fmb$ is not a frame, the algorithm necessitates as large as $\mathcal{O}(1/\varepsilon)$ iterations to reach an $\varepsilon$-accuracy on the iterates.

Assessing the convergence rate as we have just done is important when this algorithm will be used in a nested scheme as a sub-iteration to compute the proximity operator of the composition with a linear operator. Indeed, Theorem~\ref{th:convalg} and the discussion thereafter, clearly states the stability of Algorithm~\ref{algo:proxgen} to errors in the individual proximity operators of the functions $F_i, ~ i=1,\cdots,K$, provided that the errors remain summable. In a nutshell, this means that if Algorithm~\ref{algo:proxalg1} is used within Algorithm~\ref{algo:proxgen}, a sufficient number of inner iterations $N_{\mathrm{int}}$ must be chosen to ensure error summability. This number of inner iterations obviously depends on the convergence speed of Algorithm~\ref{algo:proxalg1}, hence on the structure of $\Fmb$.

\subsection{Proximity operator of a sparsity-promoting penalty}
\label{sec:prox-oper-spars}

To implement the above iteration, we need to express $\prox_{\eqi\psi}$.  The following
result gives us the formula for a wide class of penalties $\psi$ :
\begin{lemma}
  \label{th:3}
  Suppose that $\psi$ satisfies, (i) $\psi$ is convex even-symmetric , non-negative and
  non-decreasing on $[0,+\infty)$, and $\psi(0)=0$. (ii) $\psi$ is twice differentiable on
  $\mathbb{R}\setminus \{0\}$. (iii) $\psi$ is continuous on $\mathbb{R}$, it is not
  necessarily smooth at zero and admits a positive right derivative at zero $\psi^{'}_+(0) =
  \lim_{h\to 0^+} \frac{\psi(h)}{h} > 0$. Then, the proximity operator of
  $\eqi\Psi(\va)$, $\prox_{\eqi\Psi}(\va)$ has exactly one continuous solution
  decoupled in each coordinate $\va_i$ :
  \begin{equation}
    \label{eq:10}
    \prox_{\eqi\psi}(\va_i) =
    \begin{cases}
      0 & \text{if } \abs{\va_i} \le \eqi\psi^{'}_+(0)\\
      \va_i-\eqi\psi^{'}(\bar{\va}_i) & \text{if } \abs{\va_i} > \eqi\psi^{'}_+(0)
    \end{cases}
  \end{equation}
\end{lemma}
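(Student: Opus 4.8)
The plan is to reduce to the scalar case, apply Fermat's rule through the resolvent characterization \eqref{eq:prox}, and then read off the two branches from the subdifferential of $\psi$ at and away from the origin. First I would use the separability $\Psi(\va)=\sum_i\psi(\va_i)$ to split the minimization defining $\prox_{\eqi\Psi}(\va)$ into independent scalar problems $\min_{t\in\RR}\tfrac12(t-\va_i)^2+\eqi\psi(t)$; each such objective lies in $\Gamma_0(\RR)$, is coercive (because $\tfrac12(t-\va_i)^2\to+\infty$ and $\psi\ge 0$) and \emph{strictly} convex (the quadratic term is, $\eqi\psi$ is convex), so it has a unique minimizer and $\prox_{\eqi\psi}$ is well defined and single-valued. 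Continuity of $v\mapsto\prox_{\eqi\psi}(v)$ is then automatic: adding the two variational inequalities in \eqref{eq:prox} for two points shows $\prox_{\eqi\psi}$ is firmly nonexpansive, hence $1$-Lipschitz, hence continuous.

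Next I would invoke the resolvent identity from \eqref{eq:prox}: $t^\star=\prox_{\eqi\psi}(v)\iff v-t^\star\in\eqi\,\partial\psi(t^\star)$. Under hypotheses (i)--(iii), $\psi$ is differentiable off the origin with $\partial\psi(t)=\{\psi'(t)\}$ for $t\ne 0$, while at the origin even-symmetry together with the existence of the one-sided derivative gives $\partial\psi(0)=[-\psi'_+(0),\psi'_+(0)]$; moreover $\psi'$ is odd and, being the derivative of a nondecreasing convex function, satisfies $\psi'(t)\ge\psi'_+(0)>0$ for every $t>0$, with $\psi'(t)\to\psi'_+(0)$ as $t\to0^+$. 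Substituting $t^\star=0$ into the resolvent inclusion yields the first branch, $0=\prox_{\eqi\psi}(v)\iff v\in\eqi[-\psi'_+(0),\psi'_+(0)]\iff|v|\le\eqi\psi'_+(0)$. Substituting $t^\star\ne0$ yields $v=t^\star+\eqi\psi'(t^\star)$, equivalently $t^\star=v-\eqi\psi'(t^\star)$, which is exactly the (implicit) second branch with $\bar\va_i=t^\star=\prox_{\eqi\psi}(\va_i)$.

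It then remains to verify that these two branches are triggered precisely by the stated condition on $|v|$ --- that no nonzero $t^\star$ can occur when $|v|\le\eqi\psi'_+(0)$, and that a nonzero $t^\star$ always exists (and is the minimizer) when $|v|>\eqi\psi'_+(0)$. For this I would analyse $T(t):=t+\eqi\psi'(t)$ on $(0,\infty)$: it is continuous (hypothesis (ii)) and strictly increasing (identity plus the nondecreasing $\eqi\psi'$), with $T(t)\to\eqi\psi'_+(0)$ as $t\to0^+$ and $T(t)\to+\infty$ as $t\to+\infty$, hence a bijection $(0,\infty)\to(\eqi\psi'_+(0),+\infty)$; by oddness it is a bijection $(-\infty,0)\to(-\infty,-\eqi\psi'_+(0))$. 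Consequently, for $|v|>\eqi\psi'_+(0)$ there is a unique $t^\star\ne0$ with $\sign t^\star=\sign v$ solving $v=T(t^\star)$, and by uniqueness of the minimizer this $t^\star$ must be $\prox_{\eqi\psi}(v)$; whereas for $|v|\le\eqi\psi'_+(0)$ the point $0$ already satisfies the resolvent inclusion, so $\prox_{\eqi\psi}(v)=0$. At the boundary $|v|=\eqi\psi'_+(0)$ both expressions return $0$, which re-confirms the continuity claimed above.

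The argument is essentially elementary; the only points demanding care are the exact value $\partial\psi(0)=[-\psi'_+(0),\psi'_+(0)]$, where the strict positivity assumption $\psi'_+(0)>0$ enters and pins down the thresholding level, and the monotonicity/range analysis of $T$ that makes the two branches exhaustive and mutually exclusive. I would also flag explicitly that the nonzero branch is genuinely implicit — it characterizes $\prox_{\eqi\psi}(\va_i)$ as the solution of a scalar fixed-point equation rather than by a closed-form expression — which is all that is needed for the inner iterations used later.
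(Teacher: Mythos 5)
Your proposal is correct, and since the paper gives no in-text proof of Lemma~\ref{th:3} (it defers to \cite{Fadili2006} and \cite{Combettes2007}), your argument {\textemdash} separability into scalar problems, the resolvent characterization \eqref{eq:prox}, the identification $\partial\psi(0)=[-\psi^{'}_+(0),\psi^{'}_+(0)]$, and the continuity/monotone-bijection analysis of $t\mapsto t+\eqi\psi^{'}(t)$ on $(0,+\infty)$ {\textemdash} is exactly the standard proof contained in those references, including the correct reading of the second branch as an implicit equation with $\bar{\va}_i=\prox_{\eqi\psi}(\va_i)$. No gaps to report.
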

A proof of this lemma can be found in \cite{Fadili2006}. A similar result is also proposed
in \cite{Combettes2007}. Among the most popular penalty functions $\psi$ satisfying the
above requirements, we have $\psi(\va_i) = \abs{\va_i}$, in which case the associated
proximity operator is the popular soft-thresholding.

\subsection{Proximity operator of the data fidelity $f_1$}
\label{sec:prox-fidelity}
The following result can be proved easily by solving the proximal optimization problem in Definition~\ref{def:1} with $f_1$ as defined in \eqref{eq:9}, see also \cite{Combettes2007a}.

\begin{lemma}
   \label{lem:prpois}
   Let $\vy$ be the count map (i.e. the observations), the proximity operator associated to $f_1$ (i.e. the Poisson anti
   log-likelihood) is,
   \begin{equation}
     \label{eq:3}
      \prox_{\beta f_1} \vx = \left( 
        \frac{\vx[i] - \beta + \sqrt{(\vx[i] -\beta)^2 + 4\beta \vy[i]}}{2}
        \right)_{1\le i \le n}~.
    \end{equation}
  \end{lemma}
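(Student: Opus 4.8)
The plan is to exploit the separability of $f_1$. Since $f_1(\eta) = \sum_{i=1}^n f_{\mathrm{poisson}}(\eta[i])$ with each summand depending on a single coordinate, the proximal objective $\eta \mapsto \beta f_1(\eta) + \tfrac{1}{2}\norm{\vx-\eta}^2$ decouples into $n$ independent scalar problems
\[
  \eta[i] \ = \ \argmin_{t\in\RR} \ \Big( \beta f_{\mathrm{poisson}}(t) + \tfrac{1}{2}(\vx[i]-t)^2 \Big) .
\]
By Proposition~\ref{prop:objectives}(i), $f_1$ is convex, and it is clearly lsc and proper, so $\beta f_1 \in \Gamma_0(\RR^n)$ and $\prox_{\beta f_1}$ is well defined; moreover each scalar objective is strictly convex and coercive on its domain, so the coordinate-wise minimizer is unique, which justifies writing $\prox_{\beta f_1}$ component by component.

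Next I would treat the two regimes separately. When $\vy[i]>0$, the scalar objective $g(t) = \beta(-\vy[i]\log t + t) + \tfrac12(\vx[i]-t)^2$ is smooth and strictly convex on the open half-line $(0,+\infty)$ and tends to $+\infty$ at both ends of it; hence its minimizer is the unique stationary point in $(0,+\infty)$. Setting $g'(t)=0$ and multiplying through by $t$ gives the quadratic $t^2 + (\beta-\vx[i])t - \beta\vy[i] = 0$. By Vieta, the product of its two (real, since the discriminant is positive) roots equals $-\beta\vy[i]<0$, so exactly one root is positive, namely $\tfrac12\big(\vx[i]-\beta+\sqrt{(\vx[i]-\beta)^2+4\beta\vy[i]}\big)$, and this is the sought minimizer.

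When $\vy[i]=0$, the scalar objective reduces to $t\mapsto \beta t + \tfrac12(\vx[i]-t)^2$ constrained to $[0,+\infty)$; its unconstrained minimizer is $\vx[i]-\beta$, so projecting onto $[0,+\infty)$ yields $(\vx[i]-\beta)_+$. A one-line check shows this is exactly the general formula evaluated at $\vy[i]=0$, because then $\sqrt{(\vx[i]-\beta)^2+4\beta\vy[i]}=\abs{\vx[i]-\beta}$ and $\tfrac12\big(\vx[i]-\beta+\abs{\vx[i]-\beta}\big)=(\vx[i]-\beta)_+$. Hence the single closed form covers both cases, proving the claim. There is no real obstacle here; the only points requiring a little care are the correct root selection (done via the sign of the product of roots, i.e. ruling out the spurious negative root of the polynomial) and verifying that the $\vy[i]=0$ branch is subsumed by the same formula.
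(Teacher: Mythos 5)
Your proof is correct and follows exactly the route the paper intends: the paper only remarks that the result is obtained "by solving the proximal optimization problem in Definition~\ref{def:1} with $f_1$ as defined in \eqref{eq:9}", and your argument simply carries out that computation coordinate-wise, selecting the positive root of the quadratic for $\vy[i]>0$ and checking that the $\vy[i]=0$ case (a projected linear-plus-quadratic problem giving $(\vx[i]-\beta)_+$) is subsumed by the same closed form. No gaps; the implicit assumption $\beta>0$ is consistent with how the lemma is used in Algorithm~\ref{algo:deconv2}.
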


\section{Sparse iterative deconvolution}
\label{sec:sparse-iter-deconv}

The game to play now is to plug the previous results to compute the proximity operators involved in problems \eqref{eq:11} and \eqref{eq:12} in Algorithm~\ref{algo:proxgen}.

\begin{algorithm}[h]
  \noindent{\bf{Task:}} Image deconvolution with Poisson noise, solve \eqref{eq:12}. \\
  \noindent{\bf{Parameters:}} The observed image counts $y$, the dictionary $\Phb$, number of main iterations
  $N_{\mathrm{ext}}$, number of sub-iterations $N_{\mathrm{int}}$ for the proximal operator of the data fidelity term, $\mu > 0$ and regularization parameter $\eqi$. \\
  \noindent{\bf{Initialization:}}\\
  $\forall i \in \{0,1,2\},\quad p_{(0,i)} = \Phb\trans\vy$. \\
  $\va_0 = \Phb\trans y$. \\
  \noindent{\bf{Main iteration:}} \\
  \noindent{\bf{For}} $t=0$ {\bf{to}} $N_{\mathrm{ext}}-1$,
  \begin{itemize}
  \item \underline{Data fidelity} (Lemma~\ref{lem:prpois} and Theorem~\ref{th:compo}(i)-(iii)): $\xi_{(t,0)} =
    \prox_{\mu f_1\circ\Hmb\circ\Phb/3} p_{(t,0)}$.
  \item \underline{Sparsity-penalty} (Lemma~\ref{th:3}): $\xi_{(t,1)} =
    \prox_{\mu \eqi\Psi/3} p_{(t,1)} = \mathrm{ST}_{\mu\eqi/3}(p_{(t,1)})$.
  \item \underline{Positivity constraint} (Theorem~\ref{th:compo}(i)): $\xi_{(t,2)} = \prox_{\imath_\Cc\circ\Phb/3}(p_{(t,3)})$.
  \item Average the proximity operators: $\xi_{t} = (\xi_{(t,0)} + \xi_{(t,1)} + \xi_{(t,2)})/3$. 
  \item Choose $\theta_t\in]0,2[$.
  \item Update the components: $\forall i \in \{0,1,2\},\quad p_{(t+1,i)} = p_{(t,i)} + \theta_t (2\xi_{t} - \va_{t} - \xi_{(t,i)})$.
  \item Update the coefficients estimate: $\va_{t+1} = \va_t + \theta_t(\xi_t - \va_t)$
  \end{itemize}
  \noindent{\bf{End main iteration}} \\
  \noindent{\bf{Output:}} Deconvolved image $x^{\star}=\Phb\va_{N_{\mathrm{ext}}}$.
  \caption{}
  \label{algo:deconv2}
\end{algorithm}

\subsection{Analysis-type prior}
\label{sec:analysis-prior}

For problem \eqref{eq:11}, the different proximity operator are computed as follows:
\begin{enumerate}[(1)]
\item $\prox_{f_1\circ\Hmb}$ is obtained owing to Theorem \ref{th:compo}(iii) and Lemma~\ref{lem:prpois};
\item $\prox_{f_2\circ\Phb\trans} = \prox_{\eqi\Psi\circ\Phb\trans}$ is given by Theorem \ref{th:compo}(iii) and Lemma \ref{th:3};
\item $\prox_{f_3} = \prox_{\imath_{\Cc}}$ is directly is the euclidean projection onto
  the closed convex set $\Cc$. 
\end{enumerate}

\subsection{Synthesis-type prior}
\label{sec:synthesis-prior}

As far as problem \eqref{eq:12} is concerned, here is how the proximity operators are computed:
\begin{enumerate}[(1)]
\item $\prox_{f_1\circ\Hmb\circ\Phb}$ is computed using Theorem~\ref{th:compo} and
  Lemma~\ref{lem:prpois}: use Theorem~\ref{th:compo}(i) and Theorem~\ref{th:compo}(iii) if $\Phb$ is a tight frame, or Theorem~\ref{th:compo}(iii) otherwise;
\item $\prox_{f_2} = \prox_{\eqi\Psi}$ is given using Lemma \ref{th:3};
\item $\prox_{f_3\circ\Phb} = \prox_{\imath_{\Cc}\circ\Phb}$ is given by Theorem~\ref{th:compo}(i) or (ii) and the projector onto the closed convex set $\Cc$: use Theorem~\ref{th:compo}(i) if $\Phb$ is a tight frame, or Theorem~\ref{th:compo}(ii) if it is a general frame;
\end{enumerate}
For example, let $f_2$ be the $\ell_1$-norm (i.e. $\psi = \abs{.}$) and $f_3 = \imath_{\Cc} \circ \Phb$, where $\Cc$ is the positive orthant and $\Phb$ is a tight frame, then \eqref{eq:12} is solved by Algorithm~\ref{algo:deconv2}.

\subsection{Computational complexity and implementation details}
\label{sec:comp-compl-impl}

The bulk of computation of our deconvolution algorithm is invested in applying $\Phb$ (resp. $\Hmb$) and its adjoint
$\Phb\trans$ (resp. $\Hmb\trans$). These operators are never constructed explicitly, rather they are implemented as fast
implicit operators taking a vector $x$, and returning $\Phb x$ (resp. $\Phb\trans x$) and $\Hmb x$ (resp. $\Hmb\trans
x$). Multiplication by $\Hmb$ or $\Hmb\trans$ costs two FFTs, that is $2n \log n$ operations ($n$ denotes the number of
pixels). The complexity of $\Phb$ and $\Phb\trans$ depends on the transforms in the dictionary: for example, the
orthogonal wavelet transform costs $\mathcal{O}(n)$ operations, the translation-invariant discrete wavelet transform
(TI-DWT) costs $\mathcal{O}(n\log n)$, the curvelet transform costs $\mathcal{O}(n\log n)$, etc. Let $V_{\Phb}$ denote
the complexity of applying the analysis or synthesis operator. Define $N_{\mathrm{ext}}$ and $N_{\mathrm{int}}$ as the
maximal number of outer and inner iterations respectively in Algorithm~\ref{algo:proxgen} and Algorithm~\ref{algo:proxalg1}, and recall that $L$ is the number of coefficients. The computational complexity of the algorithms for solving \eqref{eq:11} and \eqref{eq:12}
are summarized  as follows:
\begin{center}
  \begin{tabular}{|c|c|} 
    \hline Problem & Computational complexity \\
    \hline \hline & $\Phb$ orthobasis or tight frame \\
    \hline \eqref{eq:11} & $N_{\mathrm{ext}}\left(N_{\mathrm{int}} \left( 4\Oc(n\log n) + 2 V_{\Phb}\right) + \Oc(n) \right)$ \\
    \hline \eqref{eq:12} & $N_{\mathrm{ext}}\left(4V_{\Phb} + N_{\mathrm{int}} \left(4n\log n\right) + \Oc(L) \right)$ \\

    \hline \hline & $\Phb$ general frame or linear bounded operator \\
   \hline \eqref{eq:11} & $N_{\mathrm{ext}}\left(N_{\mathrm{int}} \left( 4\Oc(n\log n) + 2 V_{\Phb}\right) + \Oc(n) \right)$ \\
    \hline \eqref{eq:12} & $N_{\mathrm{ext}}\left( N_{\mathrm{int}} \left(4V_{\Phb} + 4n\log n + \Oc(L)\right) \right)$ \\
    \hline
  \end{tabular}
\end{center}

\section{Results}
\label{sec:results}

\subsection{Experimental protocol}
\label{sec:prot-exper}
We applied the proposed algorithms on a simulated image of the Hubble Space Telescope Wide Field Camera of a distant
cluster of galaxies~\cite{Starck2006} and an image of the planet Saturn. In the sequel, our algorithms are respectively
dubbed Prox-FA (for the one with the analysis prior) and Prox-FS (for that associated to the synthesis prior). For both
of them, the number of interior iterations $N_{\mathrm{int}}$ was set to $10$ which was enough to ensure convergence
(i.e. the error terms in the computation of the proximity operators were under control). The obtained results were
compared with others state-of-the art algorithms: the naive approach treating the noise as if it were additive white
Gaussian (NaiveG~\cite{Vonesch2007}), the Anscombe variance stabilizing approach (StabG~\cite{Dupe2009c}), and the
algorithms proposed in \cite{Figueiredo2010} with the synthesis (PIDAL-FS) and the analysis (PIDAL-FA) priors. For all
compared algorithms, the sparsity-promoting penalty was the $\ell_1$-norm whose proximity operator is well-known
soft-thresholding operator. For fair comparison, all algorithms were stopped when the relative change between two
consecutive iterates was below a given tolerance $\delta > 0$, i.e.
\begin{equation}
  \label{eq:4}
  \norm{x_{t+1} - x_t}_2 / \norm{x_t}_2 \leq \delta~.
\end{equation}
The quantitative score of restoration quality was the the mean absolute error (MAE). The MAE was chosen as it is well suited to Poisson noise \cite{Dupe2009c}.

As usual in regularized inverse problems, the choice of the regularization parameter $\eqi$ is crucial. When the reference image is available, its choice can be tweaked manually so as to maximize the restoration quality by minimizing some quality score; e.g. MAE. But of course, this is a tedious task and cumbersome for real data. To circumvent these difficulties, in \cite{Dupe2009c}, an objective model selection driven choice based on the GCV was proposed. An approximate closed-form of the GCV was derived (see for \cite{Dupe2009c} details),  
\begin{align}
    \label{eq:6}
    \mathrm{GCV}(\eqi) = \frac{\norm{2\sqrt{y + \frac{3}{8}} - 2\sqrt{\Hmb \vx^{\star} + \frac{3}{8}}}^2}{(n - df)^2}~, \\
    \text{with } df \approx \card\left\{i=1,\ldots,L ~ \big|~ |\alpha^{\star}_i| \geq \eqi \right\}~, \nonumber
\end{align}
where $\vx^{\star}$ is the restored image provided by the estimator. It was shown in \cite{Dupe2009c} that the GCV-based choice of $\eqi$, although turned out to be close to the value that actually minimizes the MAE and the MSE, was slightly upper-biased in practice. The GCV then was used as a first good guess that might need refinement.



\subsection{Simulated sky image}
\label{sec:simulated-image}

Figure~\ref{fig:sky}(a) displays a simulated image of the sky. The maximal intensity (mean photon count) in this image
is $\sim$18000. The blurred version by $h$ (using the Hubble's PSF before being repaired \cite{Starck2006}) and the
noisy ones are depicted in Figure~\ref{fig:sky}(b) and (c). For this image, the dictionary $\Phb$ contained a tight
frame of translation-invariant wavelets (TI-DWT). For this image, we also include in the comparative study the
deconvolution obtained with Richardson-Lucy regularized with the wavelet multiresolution support (RL-MRS) as proposed in
\cite{Starck95}. The latter was specifically designed for this kind of objects. We also compare to the total-variation
regularized Richardson-Lucy algorithm (RL-TV) proposed in \cite{Dey2004}, and the fast translation invariant
tree-pruning reconstruction combined with an EM algorithm (FTITPR~\cite{Willett2004}). For each compared method, the
regularization parameter was tuned manually to achieve its best performance level (i.e. minimize MAE), and the convergence tolerance
$\delta$ was set to $10^{-5}$.

\begin{table}
  \centering
  {\footnotesize
    \begin{tabular}{|l||c|c|c|c|c|}\hline
      & RL-MRS \cite{Starck2006}& RL-TV \cite{Dey2004}& FTITPR \cite{Willett2004} & NaiveG \cite{Vonesch2007}& StabG \cite{Dupe2009c} \\
      \hline MAE & 63.5      & 52.8    & 60.9    & 39.8  & 43  \\

      \hline \hline & PIDAL-FA \cite{Figueiredo2010}& PIDAL-FS \cite{Figueiredo2010}& Prox-FA & \multicolumn{1}{c|}{Prox-FS} \\\cline{1-5} 
      MAE &  40           &  43.6        & 37.8        & \multicolumn{1}{c|}{35.6} \\\cline{1-5}
    \end{tabular}
  }
  \caption{MAE for the deconvolution of the sky image.}
  \label{tab:maesky}
\end{table}

For this image, the best algorithm is clearly RL-MRS where most of the small faint structures as well as bright objects
are well deconvolved. However, small artifacts from the wavelet transform are also present. The StabG algorithm does a
good job and preserves most of the weak objects which are barely visible in the degraded noisy image. At this intensity
regime, the NaiveG algorithm yielded satisfactory results, which are comparable to ours. FTITPR correctly restores most
of the important structures with a clean background, but many faint objects are lost. RL-TV leads to a deconvolution
similar to ours for bright objects, but the background is dominated by spurious artifacts. The PIDAL approaches give
quite similar results, PIDAL-FA leads here to a smoother estimate than PIDAL-FS. In the other hand PIDAL-FS seems to
preserve more sources, as it can be seen for example on the bottom left, where one can notice objects that do not appear
in the results of PIDAL-FA. In the same way, our algorithms yield results comparable to their respective parts of PIDAL
as expected since they solve the same optimization problems. Prox-FS shows one of the best deconvolution results with a
clean background and good restoration of most objects, although a few small details are oversmooth (see on bottom left).

These qualitative results are confirmed by a quantitative score of the restoration quality in terms of the MAE, see Table~\ref{tab:maesky}. Notice that the hight MAE value of RL-MRS, which might seem surprising given its good visual result, might be explained by the artifacts that locally destroy the photometry.

\begin{figure}[ht]
  \centering
  \begin{tabular}{@{ }c@{ }c@{ }c@{ }}
    \includegraphics[width=0.33\linewidth]{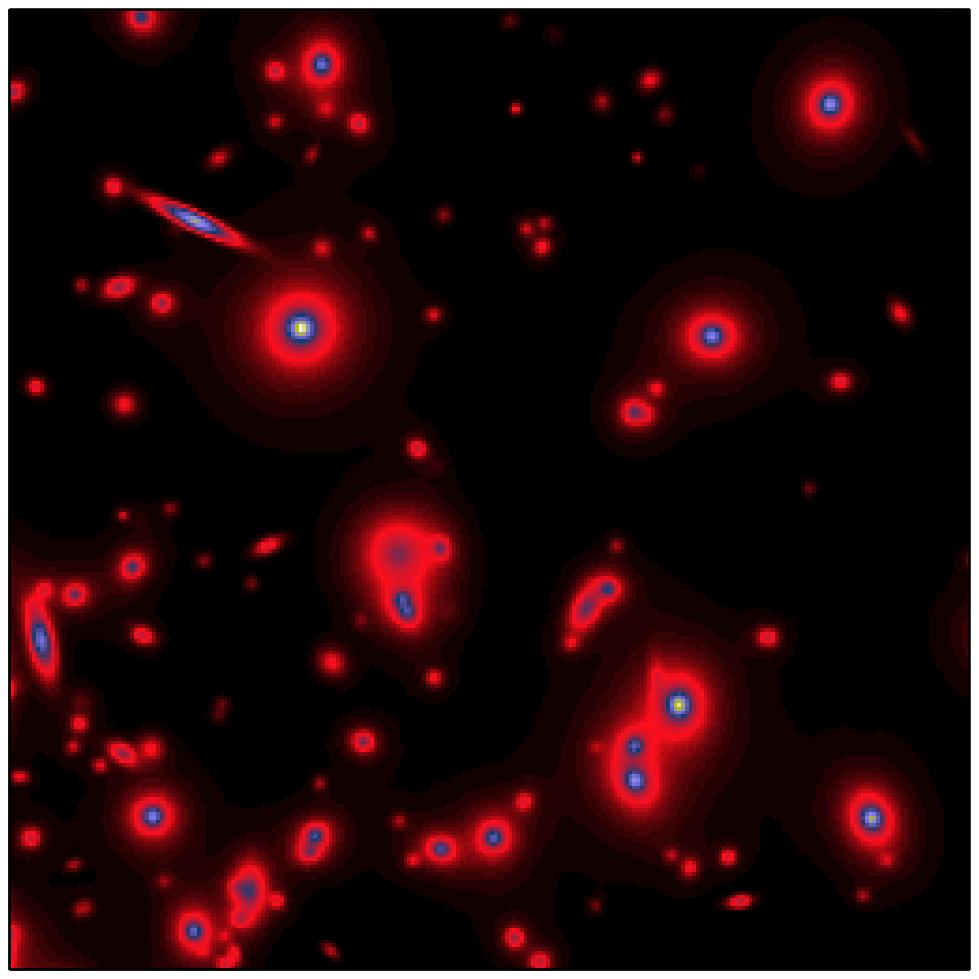} &
    \includegraphics[width=0.33\linewidth]{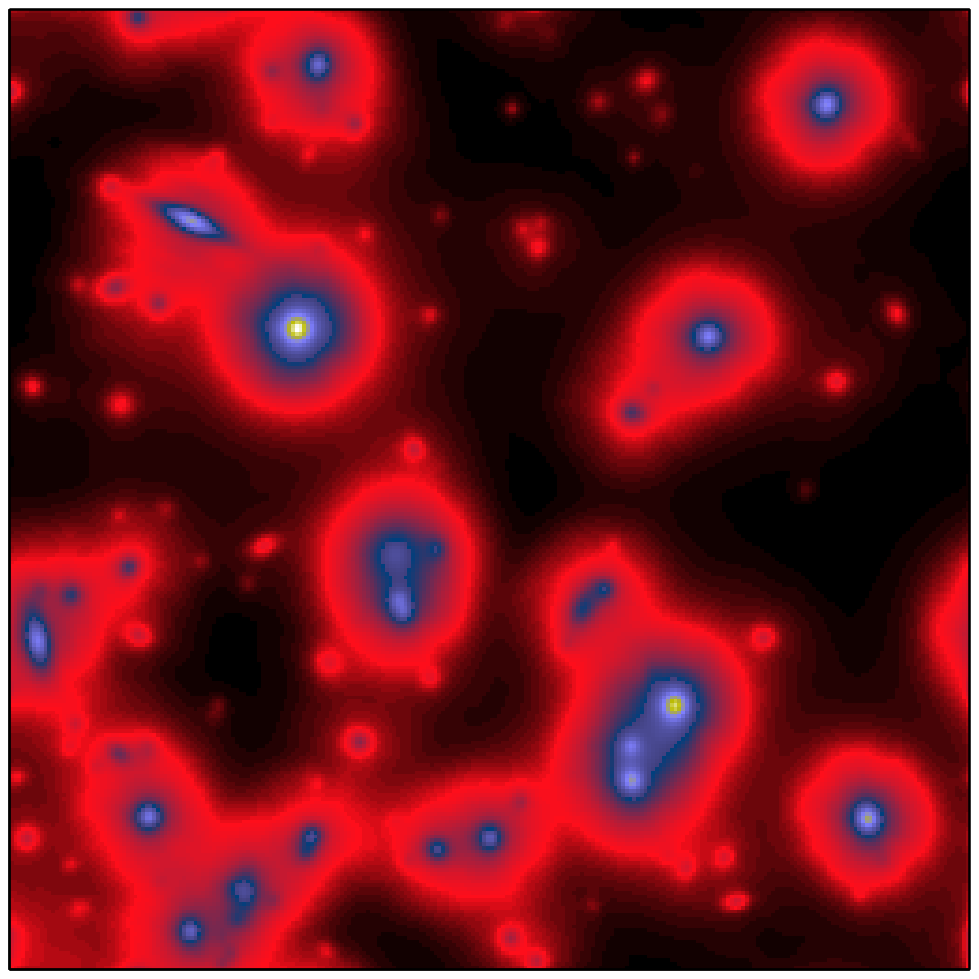} &
    \includegraphics[width= 0.33\linewidth]{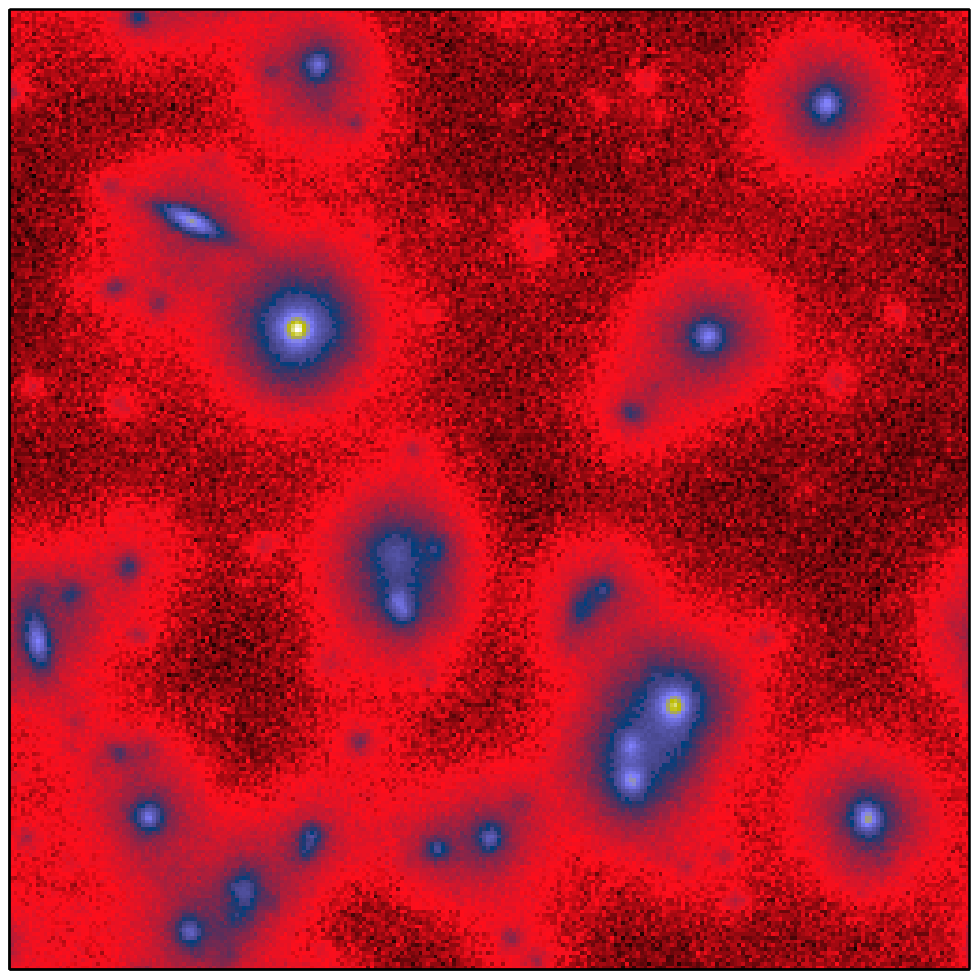} \\
    (a) & (b) & (c)	\\
    \includegraphics[width=0.33\linewidth]{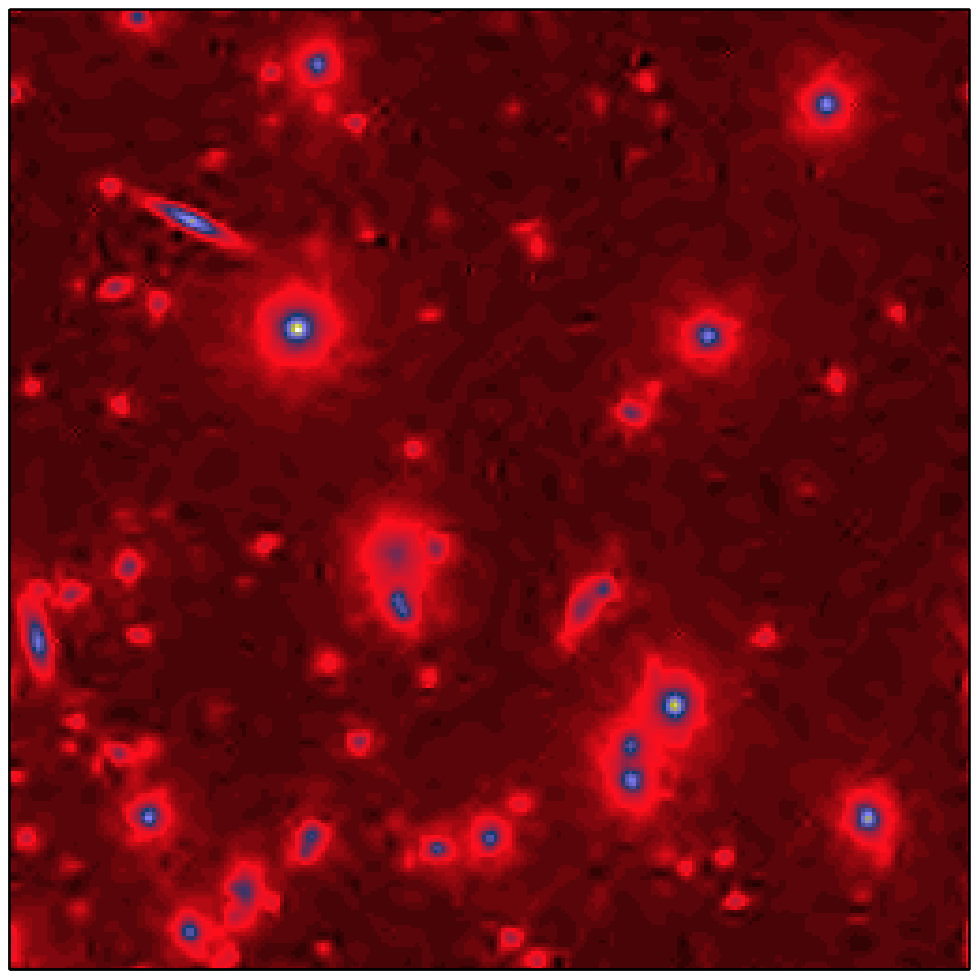} &
    \includegraphics[width=0.33\linewidth]{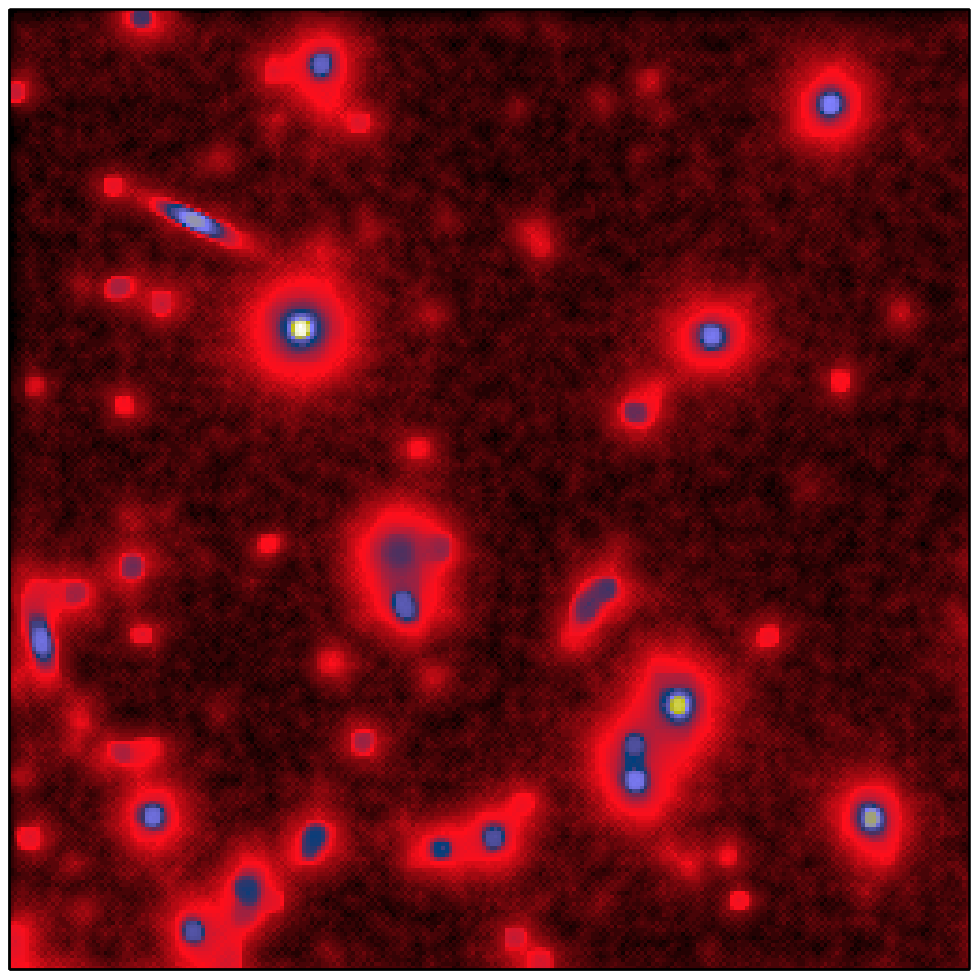} &
    \includegraphics[width=0.33\linewidth]{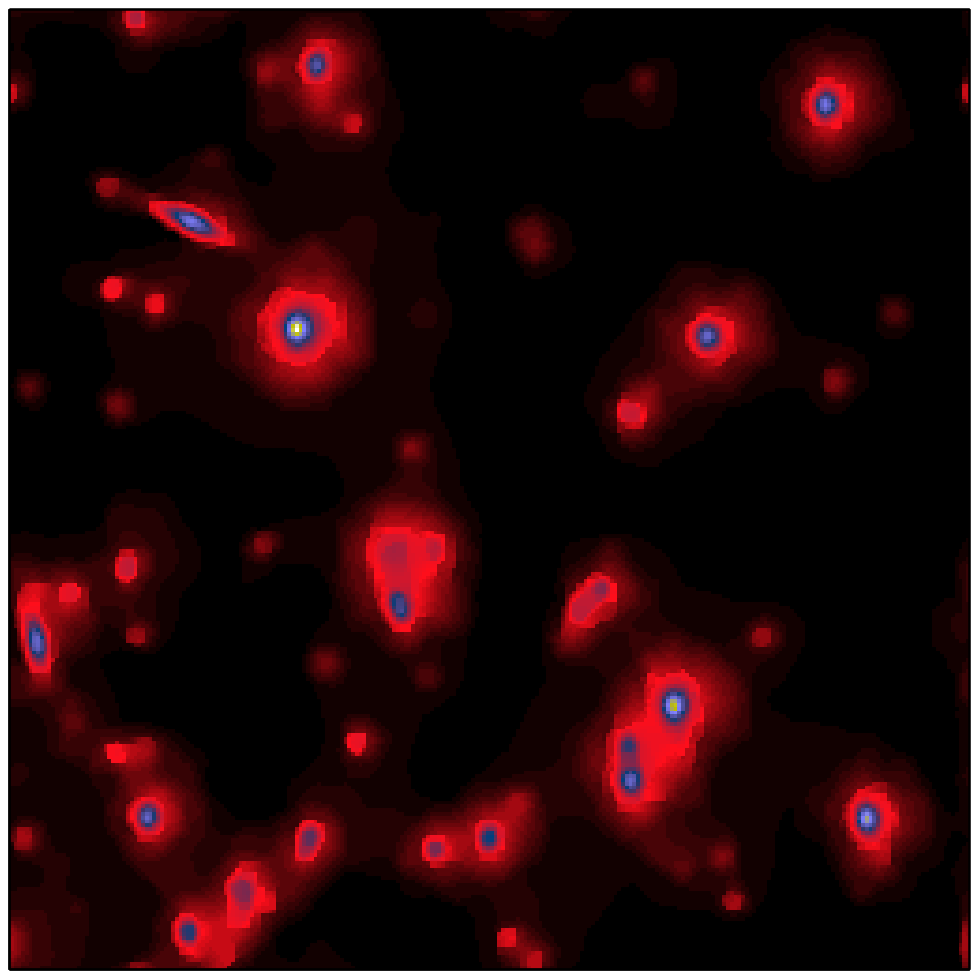} \\
    (d) & (e) & (f) \\
    \includegraphics[width=0.33\linewidth]{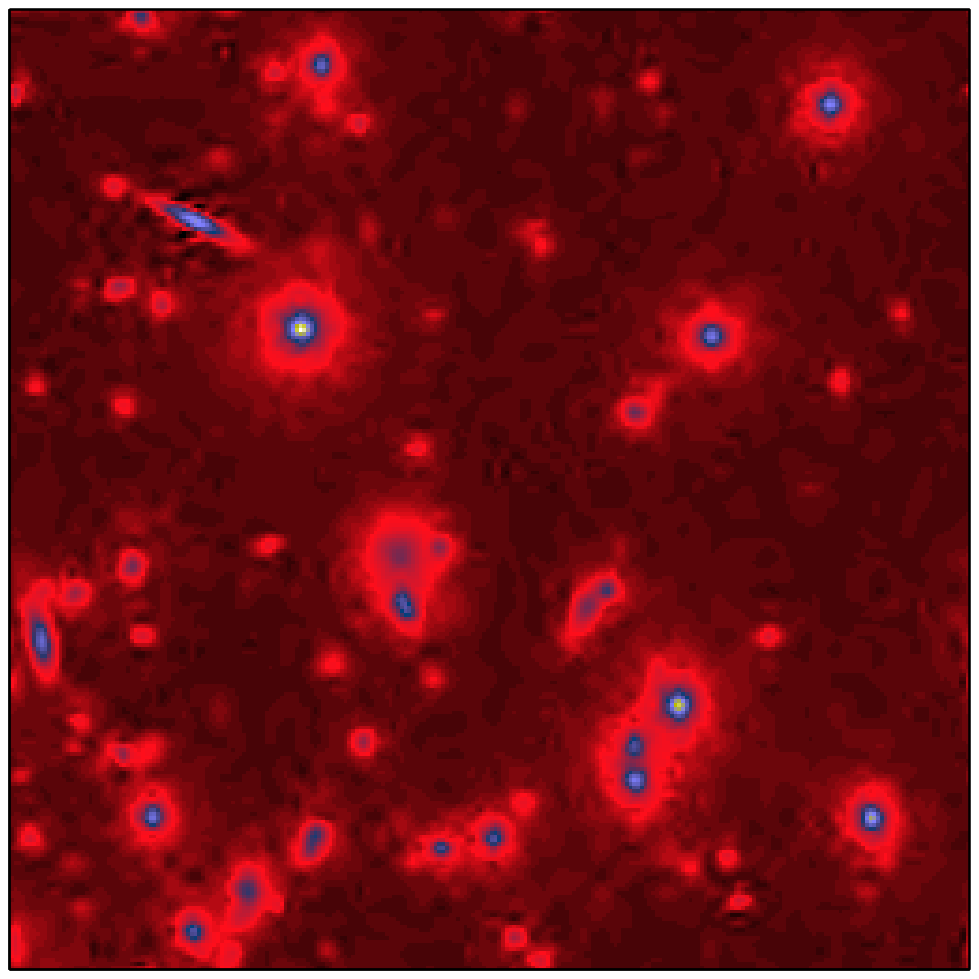} &
    \includegraphics[width=0.33\linewidth]{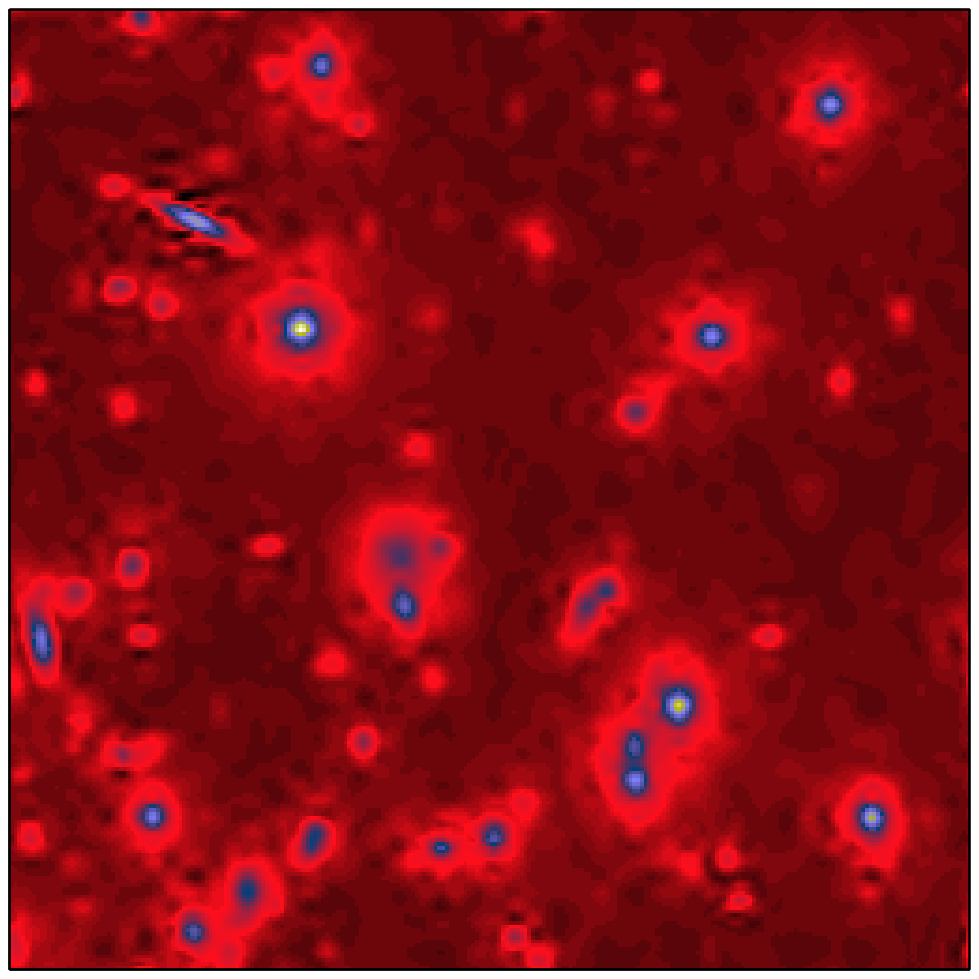} &
    \includegraphics[width=0.33\linewidth]{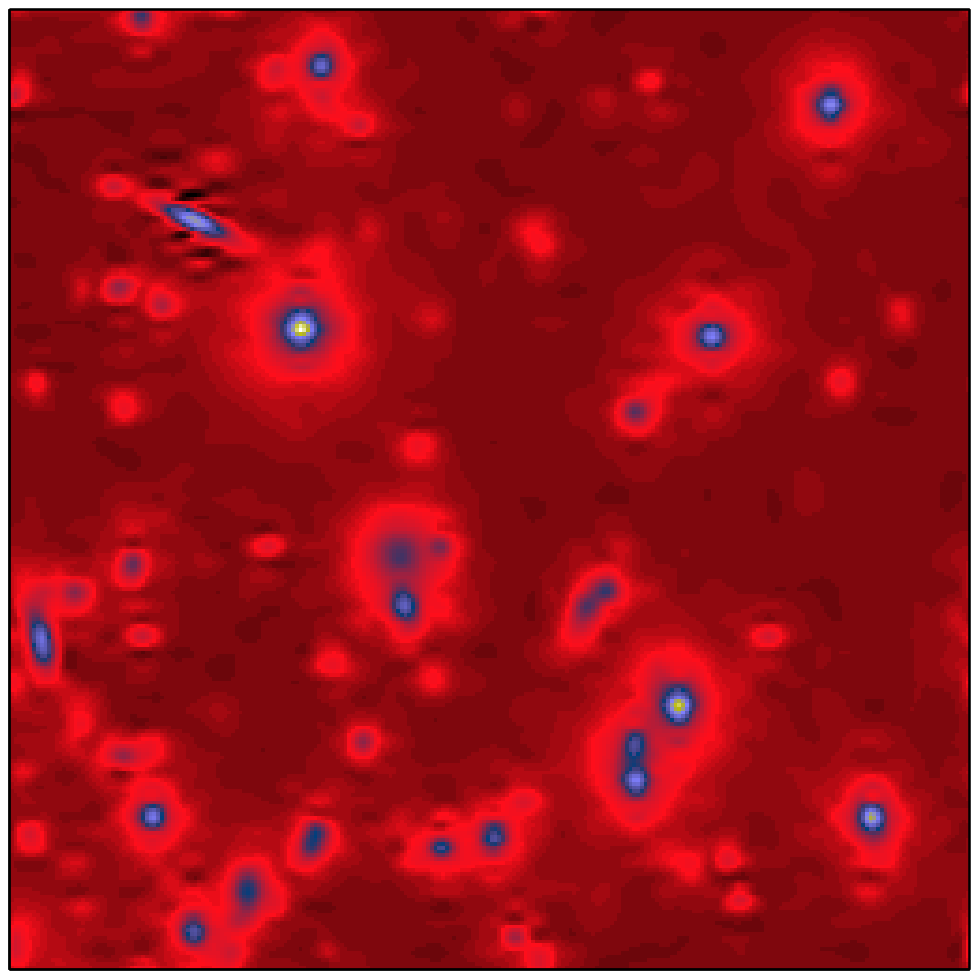} \\
   (g) & (h) & (i) \\
    \includegraphics[width=0.33\linewidth]{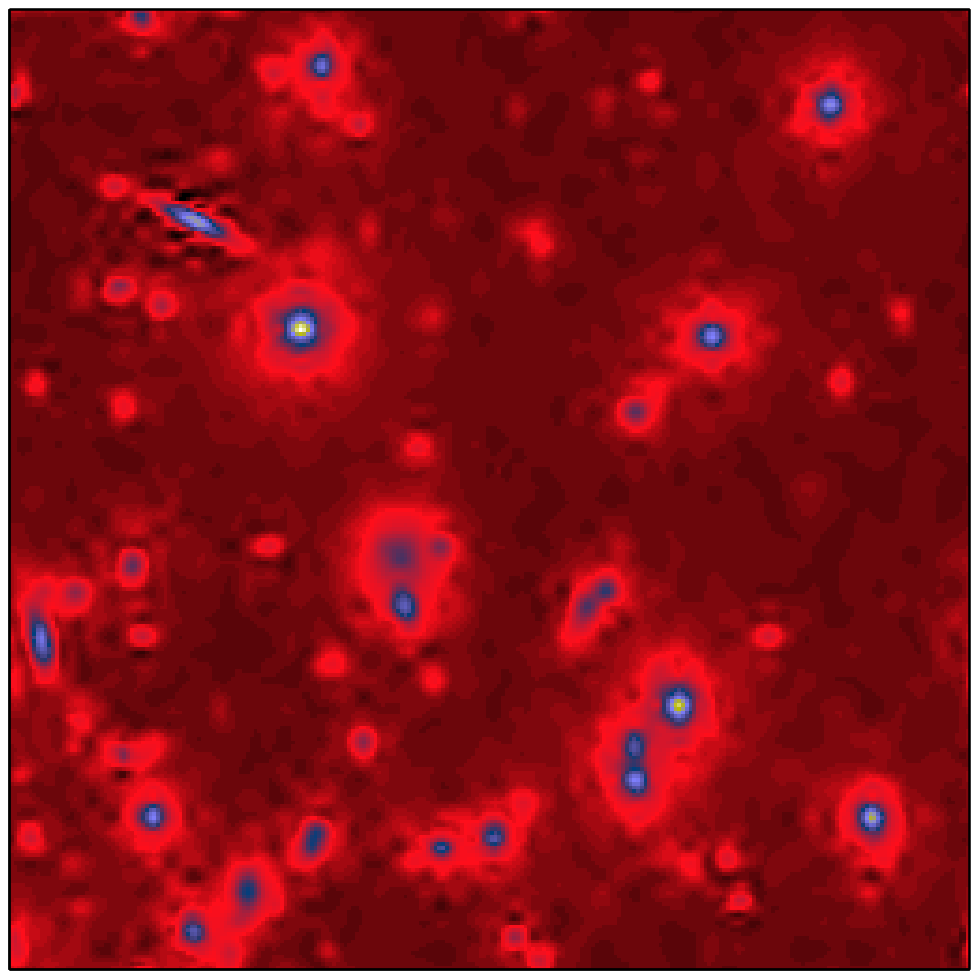}& 
    \includegraphics[width=0.33\linewidth]{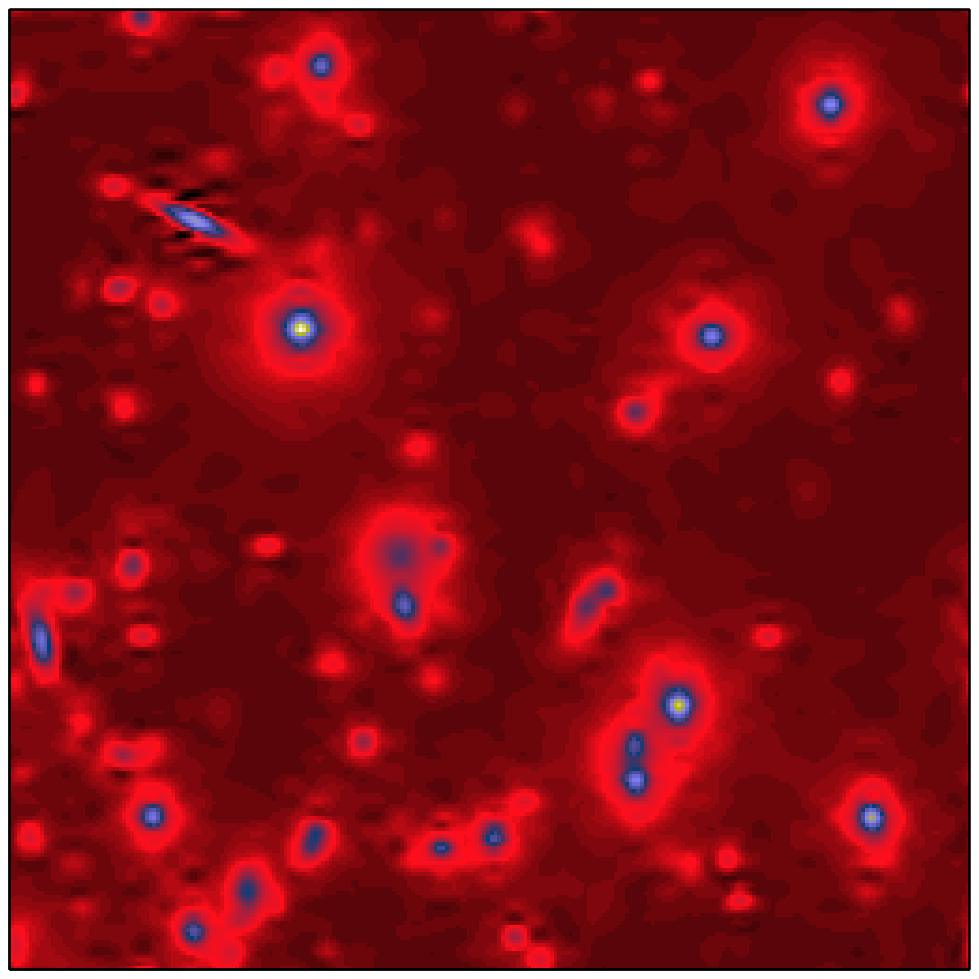} &
    \includegraphics[width=0.33\linewidth]{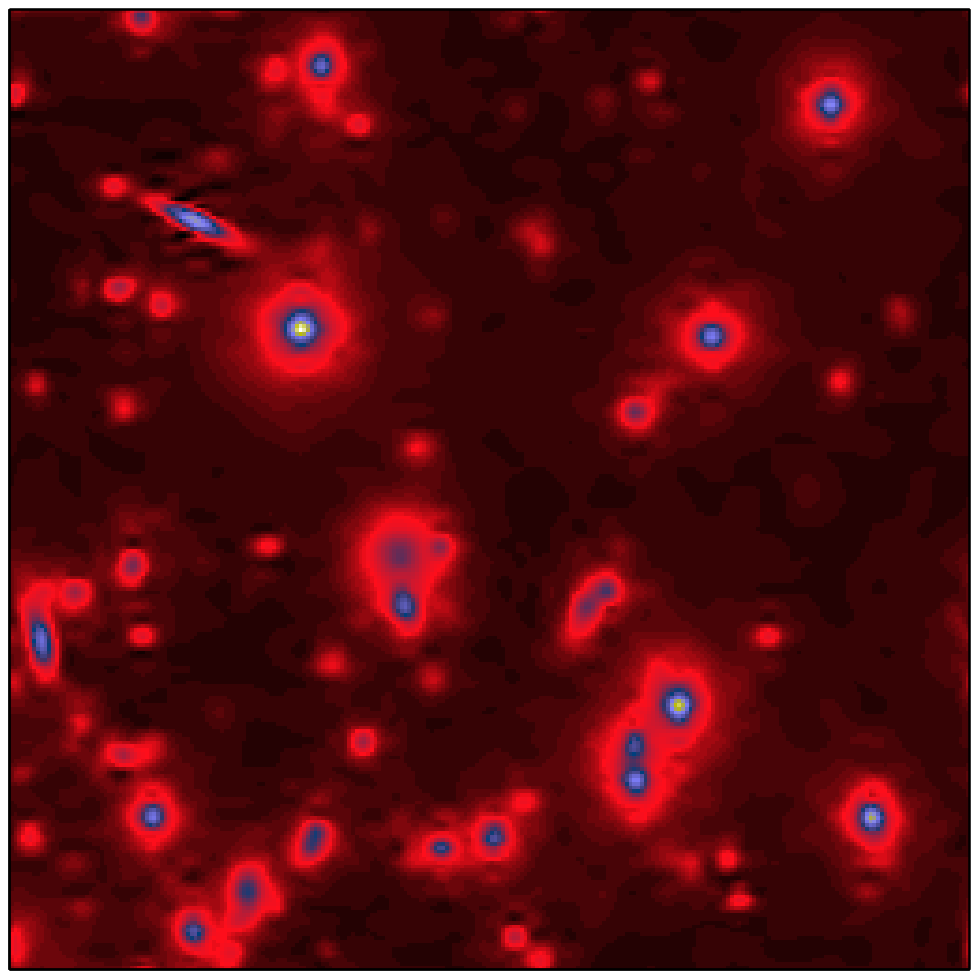}  \\
    (j) & (k) & (l)  \\
  \end{tabular}
  \caption{Deconvolution of the sky simulation. (a) Original, (b) Blurred, (c) Blurred
     and noisy, (d) RL-MRS \cite{Starck2006}, (e) RL-TV \cite{Dey2004}, (f) FTITPR \cite{Willett2004},
    (g) NaiveG \cite{Vonesch2007}, (h) StabG \cite{Dupe2009c}, (i) PIDAL-FA \cite{Figueiredo2010},
    (j) PIDAL-FS \cite{Figueiredo2010}, (k) Prox-FA, (l) Prox-FS.}
 \label{fig:sky}
\end{figure}

\subsection{Intensity level influence}
\label{sec:imaging-dataset}

To assess the impact of the intensity level (average photon counts), the different algorithms in the comparative study
were applied to an image of the planet Saturn at several intensity levels. In this experiment, the PSF was a $7 \times 7$
moving average. The regularization parameter $\eqi$ was chosen using the GCV criterion \eqref{eq:6}. Given that this image is piecewise smooth away from smooth curvilinear structures (e.g. the rings),
the curvelet dictionary will be a very good candidate to sparsify it \cite{CandesDonohoCurvelets}.



\begin{table}[htp]
  \centering
  \footnotesize
  \begin{tabular}{|c||c|c|c|c|c|c|}
    \hline Intensity  Level & NaiveG & StabG & PIDAL-FA & PIDAL-FS & Prox-FA & Prox-FS \\
    \hline \hline 5 & 0.70 (33.49) & 1.14 (54.54)  & 0.34 (16.04)   & 0.24 (11.78)   & 0.42 (20.09)    & 0.25 (11.96) \\
    \hline 30         & 1.27 (10.13) & 6.62 (52.78)   & 4.22 (33.62)   & 1.41 (11.23)   & 2.35 (18.74)   & 2.09 (16.66) \\
    \hline 100       & 3.24 (7.75)   & 26.73 (63.94) & 23.40 (55.97) & 12.46 (29.81) & 7.88 (18.85)   & 7.31 (17.49) \\
    \hline 255       & 7.48 (7.02)   & 80.45 (75.46) & 73.83 (69.26) & 49.36 (46.30) & 21.02 (19.72) & 18.78 (17.62) \\
    \hline
  \end{tabular}
  \caption{Average MAE values and relative MAE (in parentheses and in percent) for the Saturn image as a function of the intensity
    level with the naive Gaussian (NaiveG)~\cite{Vonesch2008}, stabilized gaussian (StabG)~\cite{Dupe2009c},
    PIDAL-FA and PIDAL-FS~\cite{Figueiredo2010}, and our two algorithms Prox-FA and Prox-FS.}
  \label{tab:maeres}
\end{table}

At each intensity value, ten noisy and blurred replications were generated and the MAE was computed for each
deconvolution algorithm. The average and relative (i.e. divided by the mean intensity) MAE values over the ten
realizations are summarized in Table~\ref{tab:maeres}. As expected, for low intensity levels, the best results are
obtained for the methods using data fidelity terms that account properly for Poisson noise. For higher levels, the
performance of NaiveG gets better as the Poisson distribution tends to normal with increasing intensity. The difference
in performance, especially for the high intensity regime, between our approaches and the PIDAL ones seem to originate
from the optimization algorithms used (and their convergence then). 

For this image, it seems that the synthesis prior is slightly better than the analysis one. This might seem contradictory with results reported in the Gaussian noise case by a few authors where the analysis prior was observed to be superior to its synthesis counterpart \cite{EladAnalysisSynthesis07,CandesIRL108,SelesnickSPIE09}. But one should avoid to infer strong conclusions from such limited body of experimental work. For instance, the noise is different in our setting, and the deep phenomena underlying differences and similarities between analysis and synthesis priors are still not very well understood and more investigation is needed in this direction both on the theoretical and practical sides.


\begin{figure}[htp]
  \centering
  \begin{tabular}{c@{ }c@{ }c}
    \includegraphics[width=0.33\linewidth]{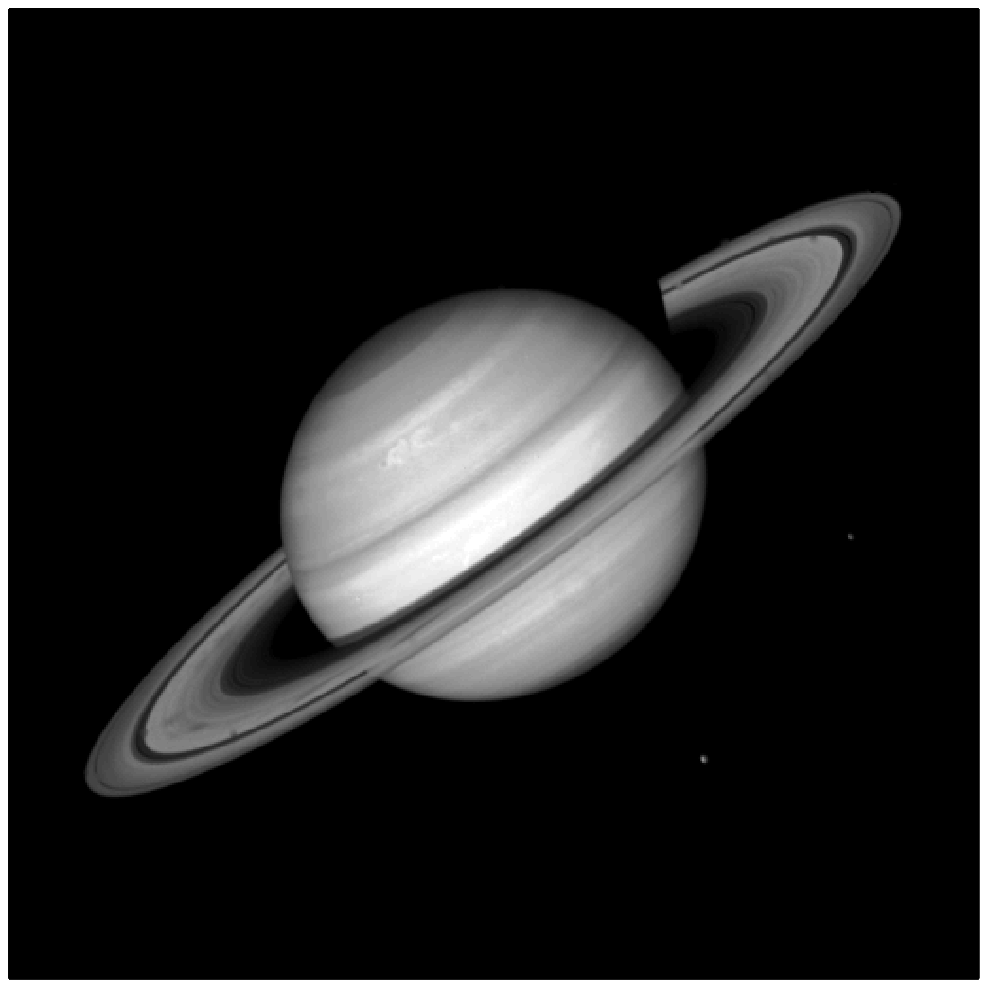} &
    \includegraphics[width=0.33\linewidth]{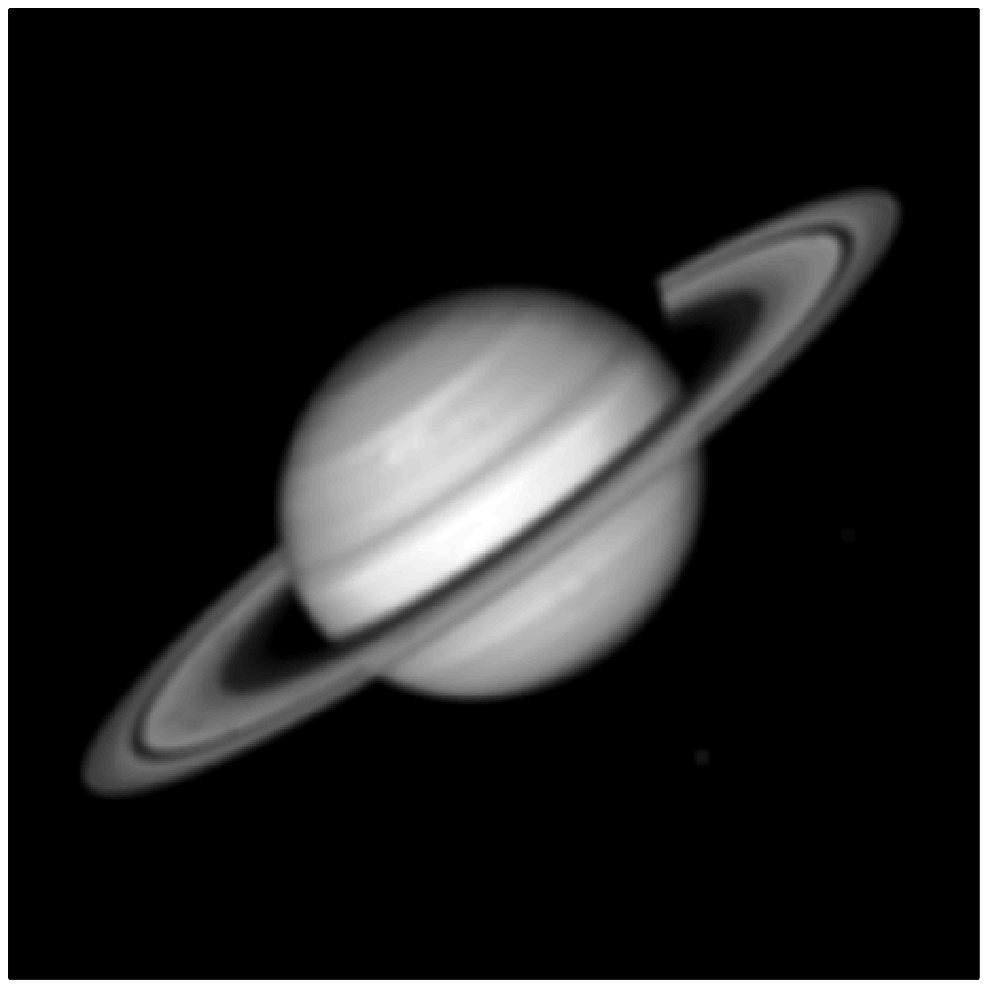} &
    \includegraphics[width=0.33\linewidth]{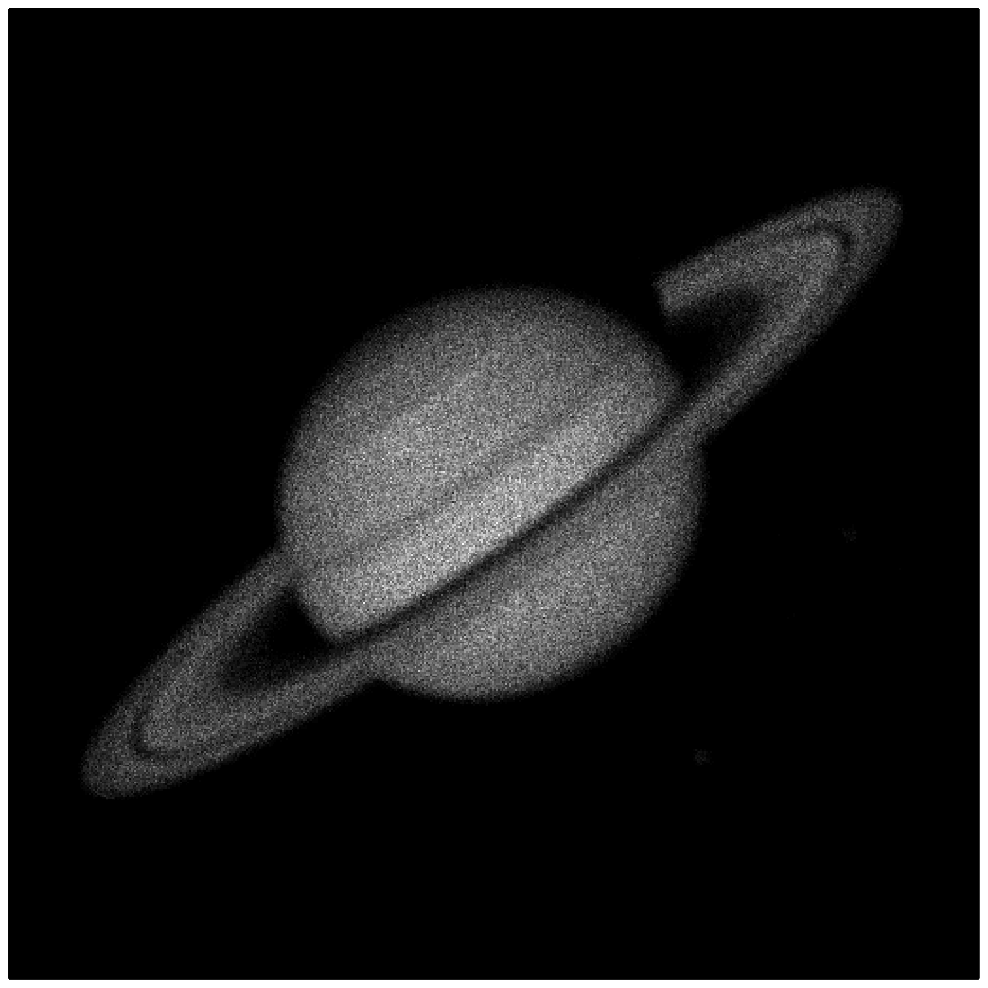} \\
    (a) & (b) & (c) \\
    \includegraphics[width=0.33\linewidth]{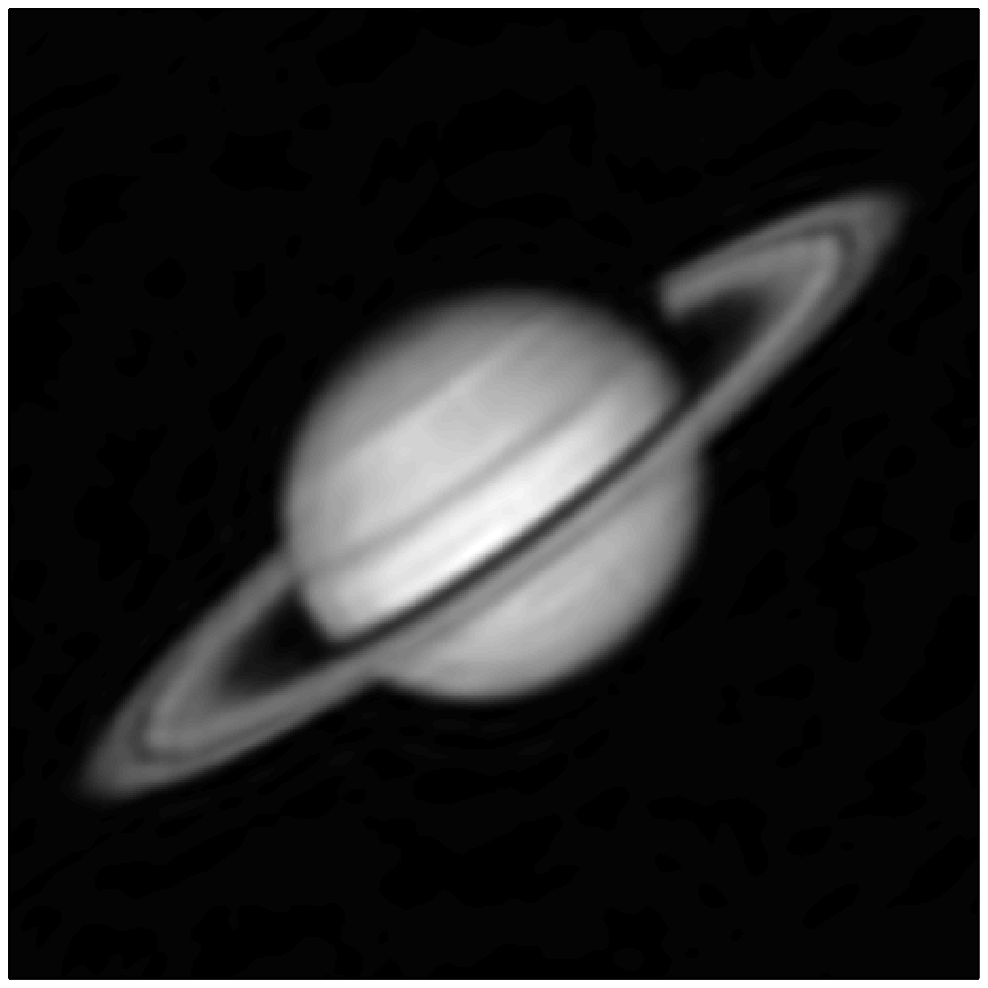} &
    \includegraphics[width=0.33\linewidth]{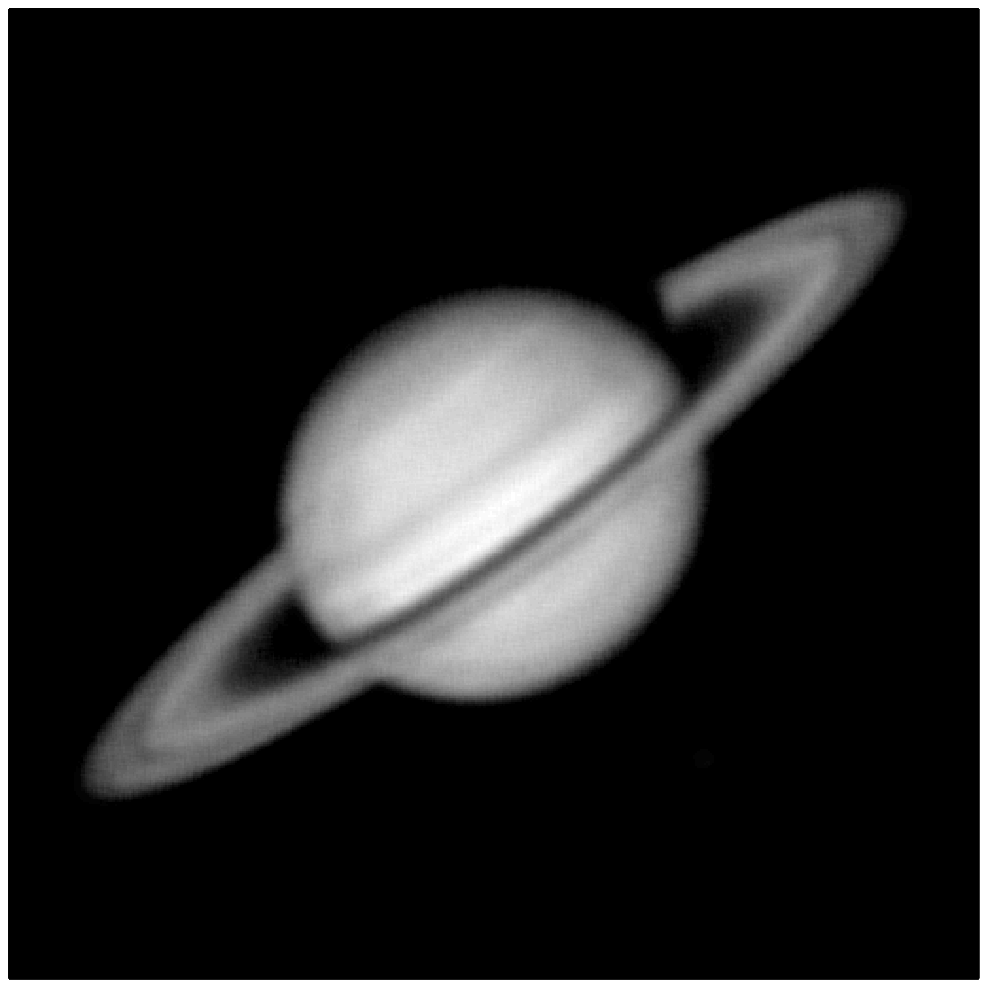} &
    \includegraphics[width=0.33\linewidth]{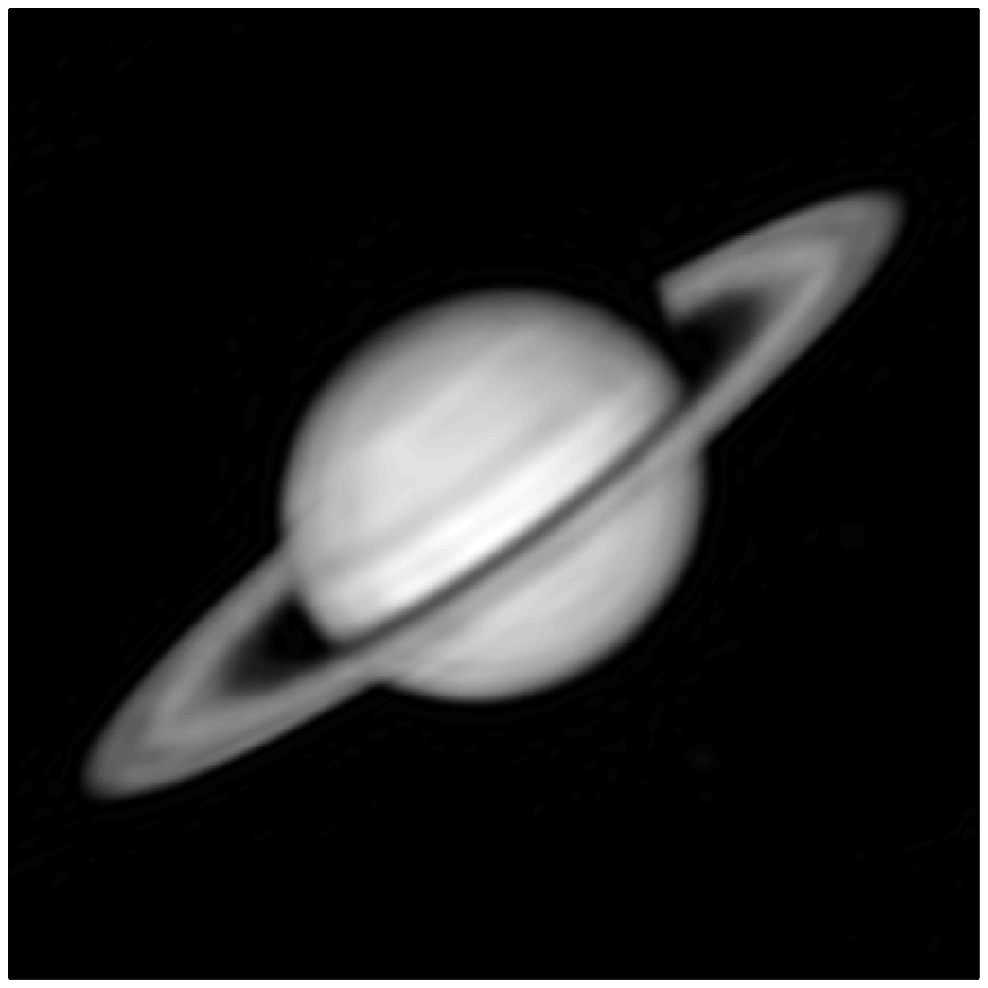} \\
    (d) & (e) & (f) \\
    \includegraphics[width=0.33\linewidth]{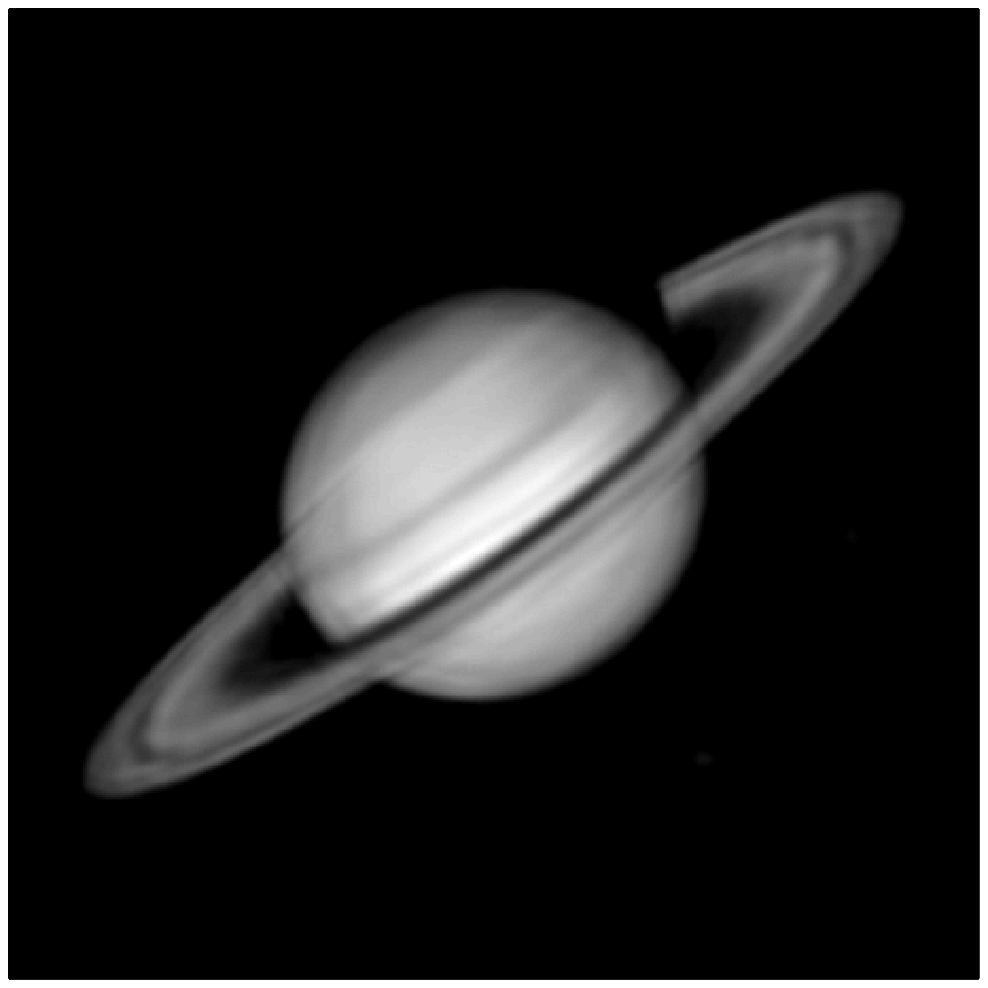} &
    \includegraphics[width=0.33\linewidth]{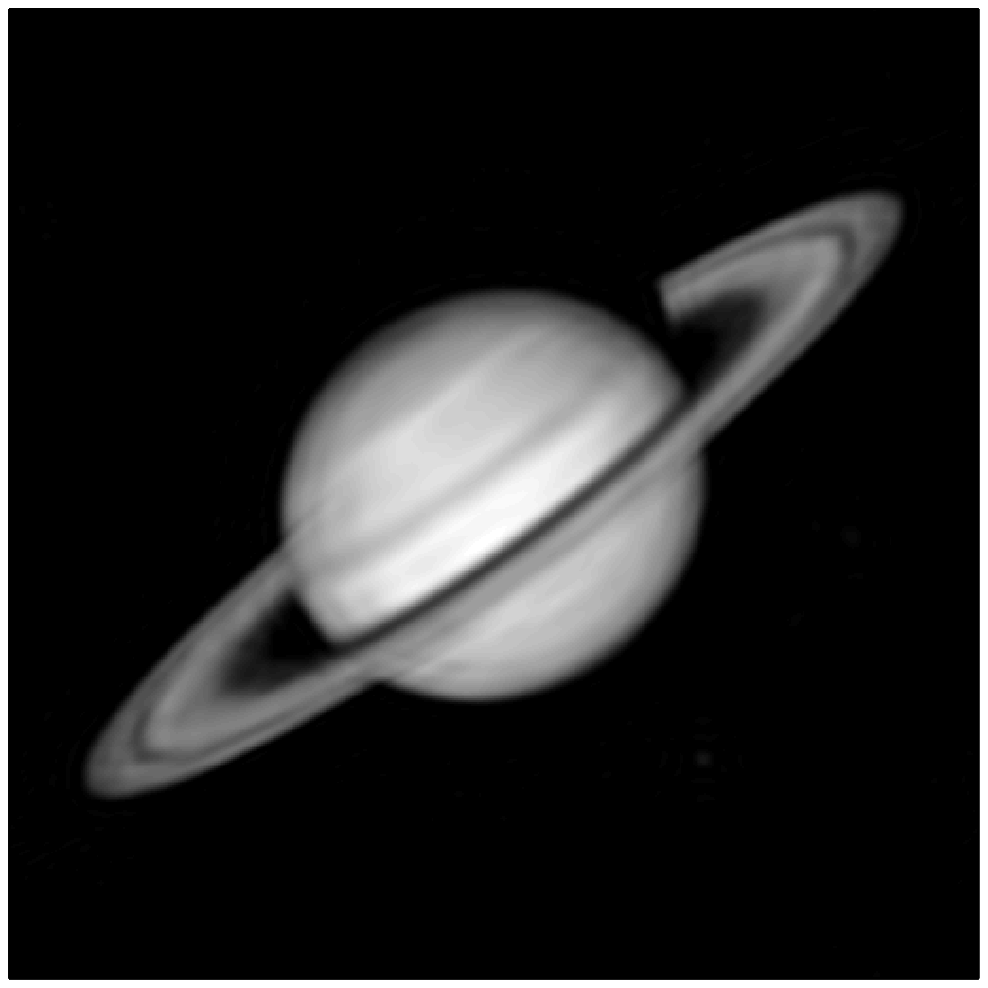} &
    \includegraphics[width=0.33\linewidth]{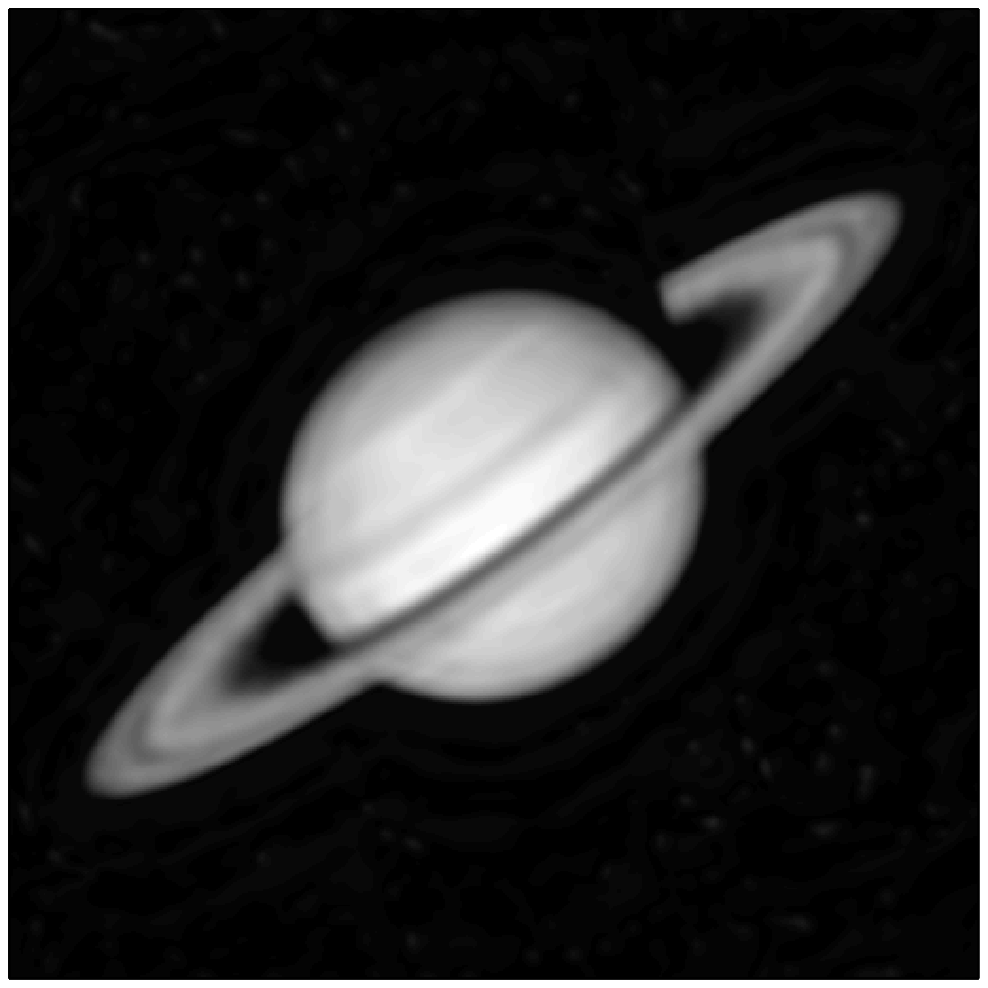} \\
    (g) & (h) & (i)
 \end{tabular}
  \caption{Deconvolution of Saturn image. (a) original, (b) Convolved, (c) Convolved and Noisy, (d) NaiveG \cite{Vonesch2008}, 
  (e) StabG \cite{Dupe2009c}, (f) PIDAL-FA \cite{Figueiredo2010}, (g) PIDAL-FS \cite{Figueiredo2010}, (h) Prox-FA, (i) Prox-FS.}
  \label{fig:saturn}
\end{figure}

The visual results obtained for a maximal intensity of 30 are portrayed in Figure~\ref{fig:saturn}. For all the methods,
the curvelet transform is able to capture most of the curved structures inside the image and then leads to a good
restoration of them. The original image contains some tiny satellites at the bottom right that appear as localized
spikes. Prox-FA was able to recover one of them, though it is hardly visible in the restored image. Prox-FS however
recovers several structures that look like background artifacts including the satellites. A wise solution would be to
use another transform in the dictionary beside curvelets, which is well-adapted to point-like structures (e.g. wavelets
or even Diracs).


\section{Conclusion}
\label{sec:conclusion}

In this paper, a novel sparsity-based fast iterative thresholding deconvolution algorithm that takes account of the
presence of Poisson noise was presented. The Poisson noise was handled properly using its associated likelihood function
to construct the data fidelity term. A careful theoretical study of the optimization problems and convergence guarantees of the iterative algorithms were provided. Several experimental experiments have shown the capabilities of our approach, which compares favorably with some state-of-the-art algorithms.

The present work may be extended along several lines. For example, our approach uses a proximal algorithm which is a
generalization of the Douglas-Rachford splitting applied to the primal problem, but to the best of our knowledge, its
convergence rate (even on the objective) is not known. Furthermore, in this paper, the regularization parameter was chosen based on the GCV formula proposed, see \eqref{eq:6}. But this formula was derived using some heuristics and was valid for the stabilized version of the deconvolution problem. This objective and automatic selection of the regularization parameter remains an open issue that is worth investigating in the future.

\paragraph{Acknowledgment}
This work has been supported by the European Research Council grant SparseAstro (ERC-228261).

\appendix
\section{Proof of Theorem~\ref{th:compo}}

As $\Amb$ is an affine operator, it preserves convexity. Thus, $f \circ \Amb$ is convex. Moreover, $\Amb$ is continuous and $f$ is lsc, and so is $f \circ \Amb$. By the domain qualification condition, we then have $f \circ \Amb \in \Gamma_0(\RR^m)$. Note that the domain assumption in the theorem is not needed for frames by surjectivity of $\Fmb$. 

By Fenchel-Rockafellar duality \cite[Section III.4]{Ekeland74}, we have
\begin{eqnarray}
\label{eq:proxprimalframe}
\bar{p} = \prox_{\prox_{f\circ\Amb}} &\iff& \bar{p} = \arginf_{p \in \RR^n} \frac{1}{2}\norm{p - x}^2 + f \circ \Amb (p) \\
\label{eq:proxdualframe}
			     	   &\iff& \bar{u} = \argmin_{u \in \RR^m} \frac{1}{2}\norm{x - \Fmb\trans u}^2 + \pds{u}{y} + f^*(u) ~,
\end{eqnarray}
and the primal solution $\bar{p}$ is recovered from the dual solution $\bar{u}$ as
\[
\bar{p} = x - \Fmb\trans \bar{u} ~.
\]
\begin{enumerate}[(i)]
\item $\Fmb$ is a tight frame with constant $c > 0$. Applying the minimality condition to \eqref{eq:proxdualframe}, using the fact that $\Fmb\Fmb\trans=c\Id$ and Moreau Identity, we obtain
\begin{eqnarray}
(\Fmb x - y) - c\bar{u} &\in& \partial f^*(\bar{u}) \nonumber \\
c^{-1}(\Fmb x - y) - \bar{u} &\in& \partial(c^{-1}f^*)(\bar{u}) \nonumber \\
\bar{u} &=& \prox_{c^{-1}f^*}\parenth{c^{-1}(\Fmb x - y)} \nonumber \\
\bar{u} &=& c^{-1}\parenth{\Id - \prox_{c f}}(\Fmb x - y) ~,
\end{eqnarray}
which leads to \eqref{eq:86}.

\item $\Fmb$ is a general frame operator. From \eqref{eq:proxdualframe}, $\frac{1}{2}\norm{x - \Fmb\trans .}^2 + \pds{.}{y}$ is continuous with $c_2$-Lipschitz continuous gradient. Therefore, applying the forward-backward splitting to the minimization problem \eqref{eq:proxdualframe}, we obtain \eqref{eq:1}. From strong convexity of the dual problem \eqref{eq:proxdualframe}, the dual minimizer is unique. 

To get the linear convergence rate, the proof is very classical and is an extension of the projected gradient descent one that can be found in classical monographs such as \cite[Theorem~8.6.2]{Ciarlet}.

\item In all other cases, the forward-backward splitting applied to \eqref{eq:proxdualframe} converges provided that $0 < \inf_t \tau_t \leq \sup_t \tau_t < 2/\opnorm{\Fmb}^2=2/c_2$. The proof of the convergence rate is technical and follows the same lines as \cite[Theorem~1]{Fadili2009b}.

\end{enumerate}

\bibliographystyle{elsarticle-harv}
\bibliography{paperbib}

\begin{thebibliography}{44}
\expandafter\ifx\csname natexlab\endcsname\relax\def\natexlab#1{#1}\fi
\expandafter\ifx\csname url\endcsname\relax
  \def\url#1{\texttt{#1}}\fi
\expandafter\ifx\csname urlprefix\endcsname\relax\def\urlprefix{URL }\fi

\bibitem[{Andrews and Hunt(1977)}]{AndrewsHunt77}
Andrews, H.~C., Hunt, B.~R., 1977. Digital Image Restoration. Prentice-Hall.

\bibitem[{Anscombe(1948)}]{Anscombe48}
Anscombe, F.~J., 1948. {T}he {T}ransformation of {P}oisson, {B}inomial and
  {N}egative-{B}inomial {D}ata. Biometrika 35, 246--254.

\bibitem[{Bertero et~al.(2009)Bertero, Boccacci, Desider{\`a}, and
  Vicidomini}]{Bertero2009}
Bertero, M., Boccacci, P., Desider{\`a}, G., Vicidomini, G., 2009. {Image
  deblurring with Poisson data: from cells to galaxies}. Inverse Problems
  25~(123006).

\bibitem[{Cand\`es and Donoho(1999)}]{CandesDonohoCurvelets}
Cand\`es, E.~J., Donoho, D.~L., 1999. Curvelets -- a surprisingly effective
  nonadaptive representation for objects with edges. In: Cohen, A., Rabut, C.,
  Schumaker, L. (Eds.), Curve and Surface Fitting: Saint-Malo 1999. Vanderbilt
  University Press, Nashville, TN.

\bibitem[{Cand\`es et~al.(2008)Cand\`es, Wakin, and Boyd}]{CandesIRL108}
Cand\`es, E.~J., Wakin, M.~B., Boyd, S.~P., 2008. Enhancing sparsity by
  reweighted {L1} minimization. Journal of Fourier Analysis and Applications
  14~(5), 877--905.

\bibitem[{Ciarlet(1982)}]{Ciarlet}
Ciarlet, P., 1982. Introduction \`a l'Analyse Num{\'e}rique Matricielle et \`a
  l'Optimisation. Paris.

\bibitem[{Combettes and Pesquet(2007{\natexlab{a}})}]{Combettes2007a}
Combettes, P.~L., Pesquet, J.-., 2007{\natexlab{a}}. A {D}ouglas-{R}achford
  splittting approach to nonsmooth convex variational signal recovery. IEEE
  Journal of Selected Topics in Signal Processing 1~(4), 564--574.

\bibitem[{Combettes and Pesquet(2007{\natexlab{b}})}]{Combettes2007}
Combettes, P.~L., Pesquet, J.-C., November 2007{\natexlab{b}}. Proximal
  thresholding algorithm for minimization over orthonormal bases. SIAM Journal
  on Optimization 18~(4), 1351--1376.

\bibitem[{Combettes and Pesquet(2008)}]{Combettes2008}
Combettes, P.~L., Pesquet, J.-C., 2008. A proximal decomposition method for
  solving convex variational inverse problems. Inverse Problems 24~(6).

\bibitem[{Combettes and Wajs(2005)}]{Combettes2005}
Combettes, P.~L., Wajs, V.~R., 2005. Signal recovery by proximal
  forward-backward splitting. SIAM Multiscale Model. Simul. 4~(4), 1168--1200.

\bibitem[{Daubechies et~al.(2004)Daubechies, Defrise, and Mol}]{Daubechies2004}
Daubechies, I., Defrise, M., Mol, C.~D., 2004. An iterative thresholding
  algorithm for linear inverse problems with a sparsity constraints. Comm. Pure
  Appl. Math. 112, 1413--1541.

\bibitem[{Dey et~al.(2004)Dey, Blanc-F{\'e}raud, Zerubia, Zimmer, Olivo-Marin,
  and Kam}]{Dey2004}
Dey, N., Blanc-F{\'e}raud, L., Zerubia, J., Zimmer, C., Olivo-Marin, J.-C.,
  Kam, Z., 2004. A deconvolution method for confocal microscopy with total
  variation regularization. In: Proceedings of the 2004 IEEE International
  Symposium on Biomedical Imaging: From Nano to Macro, Arlington, VA, USA,
  15-18 April 2004. IEEE, pp. 1223--1226.

\bibitem[{Dup{\'e} et~al.(2009)Dup{\'e}, Fadili, and Starck}]{Dupe2009c}
Dup{\'e}, F.-X., Fadili, M.-J., Starck, J.-L., 2009. A proximal iteration for
  deconvolving poisson noisy images using sparse representations. IEEE
  Transactions on Image Processing 18~(2), 310--321.

\bibitem[{E.~J.~Cand{\`e}s and Randall(2011)}]{CandesAnalysis}
E.~J.~Cand{\`e}s, Y.~Eldar, D.~N., Randall, P., 2011. Compressed sensing with
  coherent and redundant dictionaries. Applied and Computational Harmonic
  AnalysisTo appear.

\bibitem[{Eckstein and Bertsekas(1992)}]{Eckstein92}
Eckstein, J., Bertsekas, D.~P., 1992. On the {D}ouglas-{R}achford splitting
  method and the proximal point algorithm for maximal monotone operators. Math.
  Programming 55, 293--318.

\bibitem[{Ekeland and Temam(1974)}]{Ekeland74}
Ekeland, I., Temam, R., 1974. Analyse convexe et probl\`emes variationnels.
  Dunod.

\bibitem[{Elad et~al.(2007)Elad, Milanfar, and
  Rubinstein}]{EladAnalysisSynthesis07}
Elad, M., Milanfar, P., Rubinstein, R., 2007. Analysis versus synthesis in
  signal priors. Inverse Problems 23~(3), 947--968.

\bibitem[{Fadili and Peyr{\'e}(2011)}]{Fadili2009b}
Fadili, M., Peyr{\'e}, G., 2011. Total variation projection with first order
  schemes. IEEE Transactions on Image Processing 20~(3), 657--669.

\bibitem[{Fadili et~al.(2007)Fadili, Starck, and Murtagh}]{Fadili2006}
Fadili, M., Starck, J.-L., Murtagh, F., 2007. Inpainting and zooming using
  sparse representations. The Computer Journal 52~(1), 64--79.

\bibitem[{Fadili and Starck(2006)}]{Fadili2006a}
Fadili, M.~J., Starck, J.-L., 2006. Sparse representation-based image
  deconvolution by iterative thresholding. In: ADA IV. Elsevier, France.

\bibitem[{Figueiredo and Bioucas-Dias(2010)}]{Figueiredo2010}
Figueiredo, M., Bioucas-Dias, J., 2010. Restoration of poissonian images using
  alternating direction optimization. IEEE Transactions on Image Processing,{
  }Submitted.
\newline\urlprefix\url{\url{http://arxiv.org/abs/1001.2244}}

\bibitem[{Figueiredo and Nowak(2003)}]{Figueiredo2003}
Figueiredo, M., Nowak, R., 2003. An em algorithm for wavelet-based image
  restoration. IEEE Transactions on Image Processing 12, 906--916.

\bibitem[{Gabay(1983)}]{Gabay83}
Gabay, D., 1983. Applications of the method of multipliers to variational
  inequalities. In: Fortin, M., Glowinski, R. (Eds.), Augmented Lagrangian
  Methods: Applications to the Solution of Boundary-Value Problems.
  North-Holland, Amsterdam.

\bibitem[{Jammal and Bijaoui(2004)}]{Bijaoui04}
Jammal, G., Bijaoui, A., 2004. Dequant: a flexible multiresolution restoration
  framework. Signal Processing 84, 1049--1069.

\bibitem[{Jansson(1997)}]{Jansson97}
Jansson, P., 1997. Deconvolution of Images and Spectra. Academic Press, New
  York.

\bibitem[{Lemar\'echal and Hiriart-Urruty(1996)}]{LemarechalHiriart96}
Lemar\'echal, C., Hiriart-Urruty, J.-B., 1996. Convex Analysis and Minimization
  Algorithms I, 2nd Edition. Springer.

\bibitem[{Mallat(1998)}]{Mallat98}
Mallat, S.~G., 1998. A Wavelet Tour of Signal Processing, 2nd Edition. Academic
  Press.

\bibitem[{Moreau(1962)}]{Moreau1962}
Moreau, J.-J., 1962. {Fonctions convexes duales et points proximaux dans un
  espace hilbertien}. CRAS S\'er. A Math. 255, 2897--2899.

\bibitem[{Moreau(1963)}]{Moreau1963}
Moreau, J.-J., 1963. {Propri\'et\'es des applications ``prox''}. Comptes Rendus
  de l'Acad\'emie des Sciences S\'erie A Math\'ematiques 256, 1069--1071.

\bibitem[{Moreau(1965)}]{Moreau1965}
Moreau, J.-J., 1965. {Proximit\'e et dualit\'e dans un espace hilbertien}.
  Bulletin de la Soci\'et\'e Math\'ematique de France 93, 273--299.

\bibitem[{Pawley(2005)}]{Pawley2005}
Pawley, J., 2005. Handbook of Confocal Microscopy. Plenum Press.

\bibitem[{Pustelnik et~al.(2011)Pustelnik, Chaux, and Pesquet}]{Pustelnik2009}
Pustelnik, N., Chaux, C., Pesquet, J.-C., 2011. Parallel proximal algorithm for
  image restoration using hybrid regularization. IEEE Trans. on image
  processingTo appear.

\bibitem[{Sarder and Nehorai(2006)}]{Sarder2006}
Sarder, P., Nehorai, A., 2006. {Deconvolution Method for {3-D} Fluorescence
  Microscopy Images}. IEEE Sig. Pro. Mag. 23, 32--45.

\bibitem[{Selesnick and Figueiredo(2009)}]{SelesnickSPIE09}
Selesnick, I., Figueiredo, M., 2009. Signal restoration with overcomplete
  wavelet transforms: Comparison of analysis and synthesis priors. In: SPIE
  conference on Signal and Image Processing: Wavelet Applications in Signal and
  Image Processing XIII.

\bibitem[{Spingarn(1983)}]{Spingarn}
Spingarn, J.~E., 1983. Partial inverse of a monotone operator. Appl. Math.
  Optim. 10~(3), 247--265.

\bibitem[{Starck et~al.(1995)Starck, Bijaoui, and Murtagh}]{Starck95}
Starck, J.-L., Bijaoui, A., Murtagh, F., 1995. Multiresolution support applied
  to image filtering and deconvolution. CVGIP: Graphical Models and Image
  Processing 57, 420--431.

\bibitem[{Starck and Murtagh(1994)}]{Starck94}
Starck, J.-L., Murtagh, F., 1994. Image restoration with noise suppression
  using the wavelet transform. Astronomy and Astrophysics 288, 343--348.

\bibitem[{Starck and Murtagh(2006)}]{Starck2006}
Starck, J.-L., Murtagh, F., 2006. Astronomical Image and Data Analysis.
  Springer.

\bibitem[{Starck et~al.(2010)Starck, Murtagh, and Fadili}]{FadiliStarckBook09}
Starck, J.-L., Murtagh, F., Fadili, M., 2010. Sparse Signal and Image
  Processing: Wavelets, Curvelets and Morphological Diversity. Cambridge
  University Press, Cambridge, UK, in press.

\bibitem[{Starck et~al.(2003)Starck, Nguyen, and Murtagh}]{Starck03}
Starck, J.-L., Nguyen, M., Murtagh, F., 2003. Wavelets and curvelets for image
  deconvolution: a combined approach. Signal Processing 83, 2279--2283.

\bibitem[{Stark(1987)}]{Stark1987}
Stark, H., 1987. {Image Recovery: Theory and Application}. Elsevier Science \&
  Technology Books.

\bibitem[{Vonesch and Unser(2007)}]{Vonesch2007}
Vonesch, C., Unser, M., 2007. Fast wavelet-regularized image deconvolution. In:
  Proceedings of the 2007 IEEE International Symposium on Biomedical Imaging:
  From Nano to Macro, Washington, DC, USA, April 12-16, 2007. pp. 608--611.

\bibitem[{Vonesch and Unser(2008)}]{Vonesch2008}
Vonesch, C., Unser, M., 2008. {A fast thresholded Landweber algorithm for
  general wavelet bases: Application to 3D deconvolution microscopy}. In:
  Proceedings of the 2008 IEEE International Symposium on Biomedical Imaging:
  From Nano to Macro, Paris, France, May 14-17, 2008. pp. 1351--1354.

\bibitem[{Willett and Nowak(2004)}]{Willett2004}
Willett, R., Nowak, R., 2004. Fast multiresolution photon-limited image
  reconstruction. In: IEEE ISBI.

\end{thebibliography}

\end{document}